\theoremstyle{plain}
\newtheorem{theorem}{Theorem}[section]
\newtheorem*{theorem*}{Theorem}
\newtheorem{lemma}{Lemma}[section]
\newtheorem{corollary}{Corollary}[section]
\newtheorem{claim}{Claim}[section]
\numberwithin{equation}{section}
\newtheorem{defn}{Definition}[section]
\newcommand{\reals}{{\mathbb R}}
\newcommand{\breg}{\ensuremath{D_\phi}}
\newcommand{\eps}{\varepsilon}
\newcommand{\etal}{\emph{et al}\xspace}
\newcommand{\norm}[3]{\ensuremath{\|#1\|_{#3, #2}}}
\newcommand{\hamming}{\ensuremath{\{0,1\}}}
\acrodef{ann}[ANN]{approximate nearest neighbor}
\acrodef{pm}[Partial Match]{partial match}
\acrodef{bb}[BB]{Bonami-Beckner}
\title{A directed isoperimetric inequality with application to Bregman near
  neighbor lower bounds\thanks{This research was partially supported by the
    National Science Foundation under grant CCF-0953066.}}
\author{Amirali Abdullah\\University of Utah  \and Suresh Venkatasubramanian\\University of Utah}
\date{}
\begin{document}
\begin{titlepage}
\maketitle
\thispagestyle{empty}
\begin{abstract}
Bregman divergences are important distance measures that are used in applications such as computer vision, text mining, and speech processing, and are a focus of interest in machine learning due to their information-theoretic properties.
There has been extensive study of algorithms for clustering and near neighbor search with respect to these divergences. In all cases, the guarantees depend not just on the data size $n$ and dimensionality $d$, but also on a structure constant $\mu \ge 1$ that depends solely on a generating convex function $\phi$ and can grow without bound independently. In general, this $\mu$ parametrizes the degree to which a given divergence is ``asymmetric". 

In this paper, we provide the first evidence that this dependence on $\mu$ might be intrinsic. We focus on the problem of \ac{ann} search for Bregman divergences. We show that under the cell probe model, any non-adaptive data structure (like locality-sensitive hashing) for $c$-approximate near-neighbor search that admits $r$ probes must use space $\Omega(dn^{1 + \mu/c r})$. In contrast for LSH under $\ell_1$ the best bound is $\Omega(dn^{1+ 1/cr})$.

Our result interpolates between several known lower bounds both for LSH-based \ac{ann} under $\ell_1$ as well as the generally harder \ac{pm} problem (in non-adaptive settings). The bounds match the former when $\mu$ is small and the latter when $\mu$ is 
$\Omega(d/\log n)$. This further strengthens the intuition that \ac{pm} corresponds to an ``asymmetric'' version of \ac{ann}, as well as opening up the possibility of a new line of attack for lower bounds on \ac{pm}.

Our new tool is a directed variant of the standard boolean noise operator. We prove a generalization of the Bonami-Beckner hypercontractivity inequality (restricted to certain subsets of the Hamming cube), and use this to prove the  desired \emph{directed} isoperimetric inequality that we use in our data structure lower bound.
\end{abstract}


\end{titlepage}


\section{Introduction}

Bregman divergences generalize the squared Euclidean distance and standard
projective duality. They include well studied distance measures like the
Kullback-Leibler divergence, the Itakura-Saito distance, bit entropy and the exponential distance, and
appear naturally as distance  functions for data analysis.  

Bregman divergences retain many of the \emph{combinatorial}
properties of $\ell_2$ and so \emph{exact} geometric algorithms based on space decomposition
(Voronoi diagrams, convex hulls and so on) can be used to compute the
corresponding Bregman counterparts~\cite{bvd}. But the
divergences are asymmetric\footnote{In fact the squared Euclidean distance is the \emph{unique} symmetric Bregman divergence.} and violate triangle
inequality, and so break most approximation algorithms for distance problems
(clustering, near neighbor search and the like) that make heavy use of these properties.

This ``degree of violation'' can be quantified as a scalar parameter $\mu$ that
depends only on the functional form of the divergence (not the size of input or its dimension). There are many ways $\mu$ is defined in the literature~\cite{abdullah2012,bregworst,ackermann2010}, and these are all loosely related to the view of $\mu$ as a measure of \emph{asymmetry}: given a Bregman divergence $D$ over a domain $\Delta$, define $\mu$ as $\max_{x,y \in \Delta} \frac{D(x,y)}{D(y,x)}$. 

To the best of our knowledge, $\mu$
appears as a term in theoretical guarantees for \emph{all} constant factor approximation algorithms for
geometric problems in these spaces. This is highly unsatisfactory because $\mu$ can grow without bound independent of the data size or dimensionality. It is therefore natural to ask the question: 




\begin{quote}
\emph{  Is this dependence on $\mu$ intrinsic ? Or are there clever algorithms that
  can circumvent the effect of asymmetry for such problems ?
}\end{quote}

In this paper we provide the first evidence that this dependence is indeed
intrinsic under a broad range of the parameters $n$ and $d$ (namely $d \gg \log n$).
 We focus on the fundamental problem of \ac{ann} search, which has been studied extensively for Bregman divergences. 

We show the following under the cell probe model for \emph{uniform} Bregman divergences (loosely speaking, distances composed as a sum of $d$ identical measures):
\begin{theorem}\label{thm:main}
For a uniform Bregman divergence $D$ with measure of asymmetry $\mu$ in each dimension, let $L = \min \left( \frac{d}{\log n}, \mu \right)$. Any non-adaptive data structure which in $r$ probes can return even a $c'$ approximation to the nearest neighbor under $D$ with 
constant probability (over the choice of query) requires $\Omega(dn^{1 + \Omega(L/c'r)})$ space.
\end{theorem}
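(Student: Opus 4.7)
The plan is to follow the Panigrahy--Talwar--Wieder style of cell-probe lower bound for near-neighbor search, where the space requirement for an $r$-probe non-adaptive structure is controlled by the vertex-expansion of the hard query distribution, and to replace the classical (symmetric) hypercontractivity/isoperimetric input by a \emph{directed} analogue calibrated to $\mu$. Because the Bregman divergence is uniform, I first reduce to a one-dimensional primitive: each coordinate contributes an i.i.d. Bernoulli-like ``atom'' whose forward/backward cost ratio is $\mu$, so the $d$-dimensional divergence between two hypercube points can be read off (up to constants) from their Hamming profile with the two directions of flips weighted differently by a factor $\mu$. I would then tensor the hard instance on $\hamming^d$ and let the query distribution be obtained from a database point by an asymmetric noise operator $T_\rho^{\text{dir}}$ that flips a $0 \to 1$ with a much smaller probability than a $1 \to 0$ (or vice versa), so that a query at divergence $\le R$ from a planted near neighbor, and at divergence $\ge cR$ from all other database points, is generated with high probability.

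Next I would set up the communication/cell-probe reduction. Fix $n$ random database points on $\hamming^d$ and a planted query obtained by applying $T_\rho^{\text{dir}}$ to one of them. For an $r$-probe non-adaptive data structure using $S$ cells, the query is answered correctly on a probability mass that can be upper-bounded by the measure, under the directed noise, of the union of the probe regions of a typical query; standard accounting shows that this forces $S \ge \Omega(dn \cdot \text{(expansion deficit)})$, where the expansion deficit is governed by how much mass the directed noise operator can concentrate on a small set. This is exactly where an isoperimetric inequality for $T_\rho^{\text{dir}}$, on the natural ``typical set'' of the cube dictated by the database distribution, is the key input: it has to say that any set $A \subseteq \hamming^d$ with $|A|/2^d = n^{-\alpha}$ satisfies $\Pr[T_\rho^{\text{dir}} x \in A \mid x \in A] \le n^{-\Omega(\alpha \mu/c)}$.

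To prove that directed isoperimetric statement I would mimic the Bonami--Beckner route: first establish a two-point inequality $\|T_\rho^{\text{dir}} f\|_q \le \|f\|_p$ on $\hamming$ with an explicit $(p,q,\rho)$ triple depending on $\mu$, valid for functions supported on a balanced slice so that the asymmetry of the noise is not swamped by boundary mass; then tensor to $\hamming^d$, which is clean because the noise acts coordinatewise and the restricted class is a product; finally deduce the set-expansion inequality by taking $f = \mathbf{1}_A$ and optimizing the $(p,q)$ exponents. Plugging the resulting $n^{-\Omega(\mu/c r)}$ expansion bound into the cell-probe accounting of the previous step gives the claimed $S = \Omega(dn^{1+\Omega(\mu/cr)})$, and the $\min(d/\log n,\mu)$ truncation appears because once $\mu \gg d/\log n$ the directed noise essentially refuses to flip in one direction at all and the hard set must fit inside a Hamming ball of radius $O(d/\log n)$, at which point the classical (symmetric) bound on that smaller cube takes over.

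The main obstacle I expect is the restricted hypercontractivity step: the naive directed noise operator is not self-adjoint, its eigenfunctions are not the Fourier characters, and hypercontractivity can genuinely fail globally, so one must carefully choose the ``allowed'' subset of the cube (presumably a band of balanced Hamming weights consistent with the database distribution) and verify that the two-point inequality actually holds with the right $(p,q,\rho)$ dependence on $\mu$ there. Getting the quantitative exponent right on this restricted inequality---so that it tensors cleanly and feeds a sharp $\mu/cr$ into the final bound---is where the real work will be; once that is in hand, the cell-probe plumbing is essentially the same as in the $\ell_1$ case.
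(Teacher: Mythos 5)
Your high-level architecture matches the paper's: an asymmetric noise operator calibrated so that $0\to1$ and $1\to0$ flips occur at rates differing by a factor $\mu$, a planted-near-neighbor gap instance, a directed Bonami--Beckner-type inequality, and the Panigrahy--Talwar--Wieder cell-sampling accounting to convert expansion into a space lower bound. The gap-instance calculation (ratio $\Omega(\mu/\eps)$ between near and far neighbors) and the $\min(d/\log n,\mu)$ truncation are also present in the paper, though the truncation there comes from a Chernoff-concentration requirement ($\eps d/\mu=\Omega(\log n)$ is needed so that the $O(\eps d)$ and $\Omega(\mu d)$ distances concentrate over all $n$ pairs), not from a Hamming-ball containment argument.

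The genuine gap is in your route to the directed isoperimetric inequality. You propose to prove a two-point inequality $\|T_\rho^{\text{dir}} f\|_q\le\|f\|_p$ on $\{0,1\}$ ``valid for functions supported on a balanced slice'' and then tensor to $\{0,1\}^d$. But tensorization of a two-point inequality requires the domain of validity to be a product, and a balanced slice (or band of Hamming weights) of $\{0,1\}^d$ is emphatically not a product of one-dimensional restrictions --- so the ``then tensor to $\{0,1\}^d$, which is clean because the restricted class is a product'' step does not go through as stated. Moreover, as you correctly note, the directed operator is not self-adjoint and hypercontractivity for it fails globally (the paper gives the one-bit counterexample $f(0)=0$, $f(1)=1$), so one cannot simply drop the restriction. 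The paper avoids proving any restricted two-point inequality at all. It instead (i) factors the directed operator as $R_{p_1,p_2}=T_{p_2}R_{p'',0}$ with a symmetric $T_{p_2}$ and a one-sided $R_{p'',0}$, (ii) invokes the Ahlberg--Broman--Griffiths--Steif identity relating the uniform-measure Fourier coefficients of $R_{p,0}f$ to the $\tfrac{1+p}{2}$-biased Fourier coefficients of $f$, thereby turning the one-sided directed operator into the \emph{symmetric} operator $\tau_\delta$ on a $p$-biased product measure where Keller's full biased Bonami--Beckner inequality applies with no domain restriction, and (iii) uses the restriction of $f$ to the lower half of the cube only at the very end, as a one-line norm-monotonicity observation comparing $\|f_L\|_{q,p'}$ with $\|f_L\|_{q,1/2}$ when $p'>\tfrac12$. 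That change of measure, rather than a restricted hypercontractivity theorem, is the missing idea; without it your plan stalls exactly at the point you flag as ``where the real work will be.''
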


In particular, this lower bound applies to methods based on locality-sensitive hashing and to several popularly used divergences such as the Kullback-Leibler or Itakura-Saito distances. Note that in comparison to the space lower bound of 
$ \Omega(dn^{1 + \Omega(1/cr)})$ for Euclidean (or $\ell_1$) \ac{ann}~\cite{geometric}  which is sub-quadratic and near linear for sufficiently high $c$,  the space lower bound here is polynomial in $n$ with an exponent of $\Omega(\mu)$ for constant factor approximations and (as we show later) strengthens upto $\mu = \Theta(d / \log n)$.\footnote{Note that since there are $2^d$ points on the cube, we must have that $d > \log n$ just to fit all the point set in the cube.} This indicates the increased hardness introduced by asymmetry. 

\paragraph{\ac{ann} and \ac{pm}.}

There is one aspect of our work that may be of independent interest. Separately from our main result, we can show a direct reduction from geometric problems on the Hamming cube to the equivalent problems for Bregman divergences.In Section 
\ref{sec:partial} we find a very interesting ``interpolation'' of lower bounds parametrized by $\mu$: a constant factor approximation for Bregman \ac{ann} with $\mu = O(1)$ implies a constant factor approximation for $\ac{ann}$ under $\ell_1$, and a similar approximation for Bregman $\ac{ann}$ with $\mu = \Omega(d)$ implies a constant factor approximation for \ac{pm}, which is notoriously hard problem. Intriguingly while lower bounds for \ac{pm} are in general higher than those of \ac{ann}, at the intermediate point $\mu = \Theta (\frac{d}{\log n})$ in the interpolation we already obtain lower bounds that are as strong as those known for \ac{pm} (with the qualifier that our analysis restricts to non-adaptive algorithms.

One interpretation of this is that $\mu$ captures the intuition that \ac{pm} is an ``asymmetric'' version of \ac{ann}. It would also be interesting if this directed perspective allows us to obtain improved lower bounds for \ac{pm} itself by a reduction in the opposite direction. Indeed in the strictly linear space regime, the lower bound of $\Omega(d)$ queries for our asymmetric \ac{ann} is stronger than those known for \ac{pm} ($\Omega \left(d / \log d \right)$ for adaptive algorithms by \cite{directsum}
).



\subsection{Overview of our approach.}
\label{sec:overview-our-methods}

Our approach makes use of the Fourier-analytic approach to proving lower bounds
for (randomized) near-neighbor data structures that has been utilized in a
number of prior works~\cite{geometric,motwanilower,metricexpand}.  This
approach generally works as follows: one thinks of the purported data structure
as a partition of the Hamming cube, and in particular as a function defined on
the Hamming cube. Then one shows that any such function is ``expansive'' with
respect to small perturbations: in effect, that points scatter all over the
cube. As a consequence, probing any particular cell of a data structure does not
yield enough useful information because of the scattering, and one has to make
many probes to be sure.  The key technical result is showing that the function
is expansive, and this is done using Fourier-analytic machinery, and
hypercontractivity of the noise operator in particular~\cite{ryan2013}.  One also needs to
construct a "gap instance" where the gap between nearest neighbor and second
nearest neighbor is large.

While black-box reductions from $\ell_1$-\ac{ann} can yield weak lower bounds for Bregman divergences (see Section~\ref{sec:reduct-from-hamm}), we need a much stronger argument to get a $\mu$-sensitive bound. Specifically, we need the following components:
\begin{itemize}
\item \textbf{A gap instance}: We create an instance that separates a near neighbor
  at distance $\eps d$ from a second nearest neighbor at distance $\mu d$. To do
  so, we define a Bregman hypercube and associated asymmetric noise operator
  (with different probabilities of changing $0$ to $1$ and $1$ to $0$) and observe our gap is far stronger than the natural symmetric analog - $\Omega \left(\frac{\mu}{\eps} \right)$ 
vs $\Omega \left(\frac{1}{\eps} \right)$.
\item \textbf{Directed  hypercontractivity}: The Fourier-analytic machinery breaks down for our noise
  operator because of lack of symmetry. Indeed, a simple example shows that a
  natural directed analog of the \ac{bb} inequality cannot be true. Instead, we prove a directed \ac{bb} inequality in Section~\ref{sec:asymm-bregm-cube} that is true "on average", or
  on a subset of the hypercube, which will be sufficient for our lower bound. We
  prove this by relating the norm of the directed noise operator to related norms on biased (but symmetric) measure spaces, 
allowing us to make use of \ac{bb}-type inequalities in these spaces.  
\item \textbf{A scatter lemma}: Showing that points
  "scatter" is relatively easy in symmetric spaces: in the directed setting, the argument can be made in a 
similar way but requires a nontrivial analysis of associated collision rates and inner products which we carry out in Section~\ref{sec:shattering-property}.
\item \textbf{An information-theoretic argument}: We borrow the argument used by
  \cite{geometric}. Essentially, the scatter lemma shows a small sampling of the cells of a successful data structure must resolve many query points and thus will have high information content. This allows us to lower bound the space required by such a structure in Section~\ref{sec:from-hyperc-lower} and obtain Theorem \ref{thm:main}.
\end{itemize}


\section{Related Work}
\label{sec:related-work}
Bregman distances were first introduced by Bregman~\cite{bregman}. They are the unique divergences that satisfy certain axiom systems for distance measures~\cite{csiszar}, and are key players in the theory of information geometry~\cite{amari}. Bregman distances are used extensively in machine learning, where they have been used to unify boosting with different loss functions~\cite{collins2002logistic} and unify different mixture-model density estimation problems~\cite{dhillon}. A first study of the algorithmic geometry of Bregman divergences was performed by Boissonnat, Nielsen, and Nock~\cite{bvd}. They observed since Bregman divergences retain the same combinatorial structures as $\ell_2$, many exact algorithms from the Euclidean domain carry over naturally with the same bounds. For example, they showed that  exact near neighbors can be computed in $O(n^{\lfloor \frac{d}{2} \rfloor})$ via a Voronoi diagram. Nielsen and Nock also observed that the smallest enclosing disk can be computed exactly in polynomial time~\cite{nock2005fitting}.
 
As discussed earlier, these parallels do \emph{not} carry over to the approximate setting with the lack of a triangle inequality and symmetry rendering most tools for algorithm design useless. The algorithms that do exist attempt a work around via a structure constant $\mu$. This constant is at least $1$, and grows larger as the space becomes increasingly non-metric. There are many algorithms for clustering whose resources are parametrized by $\mu$:  M\"{a}nthey and Roglin~\cite{roglin1} compute approximate $k$-means with an extra  $\mu^6$ factor under a certain perturbation model. Ackermann and  Bl\"{o}mer \cite{musimilarcoresets} exhibit a  $ O\left(\mu^2 \log k \right)$-approximate solution to $k$-means clustering via a $k$-means$++$-like procedure.  The same authors give a $O(\mu)$ approximate $k$-median clustering for a certain class of well behaved input instances~\cite{ackermann3}.  McGregor and Chaudhuri \cite{mcgregor}  avoid dependence on $\mu$ in an approximate algorithm for $k$-means clustering under the KL-divergence, but at the cost of a $\log(n)$ factor in approximation. They also show $k$-means is NP hard to approximate within a constant factor if the centers are restricted to be from the point set and implicitly leverage the non-metric nature of the space in their bound. 

For \ac{ann} search, Abdullah, Moeller and Venkatasubramanian~\cite{abdullah2012} gave an algorithm that is efficient in constant dimensions.  Their algorithm yields a $1+\eps$ approximate nearest neighbor with an additional dependence on $\mu^{O(d)}$ besides standard dependence on factors of $\frac{1}{\eps^{O(d)}}$ and $\log n$. Indeed, this paper is a consequence of attempting to extend their results to higher dimensions. 

There are numerous heuristic algorithms for computing with Bregman divergences approximately, including algorithms for the minimum enclosing ball~\cite{nsmallestdisk} and near neighbor search~\cite{cayton-thesis,bregsearch}.

Lower bounds for near neighbor search in metric spaces have been studied extensively. Borodin, Ostrovsky and Rabani~\cite{emnn} show a lower bound that any randomized cell probe algorithm for the exact match problem that must probe at least $\Omega(\log d)$ cells. Barkol and Rabani improve this bound to $\Omega(\frac{d}{\log n})$ cells~\cite{2000tight}. Liu~\cite{liu} proves a lower bound of $d^{1 - o(1)}$ on the query time of a deterministic approximate nearest neighbor algorithm in the cell probe model, whereas Chakrabarti and Regev give a lower bound of $\Omega \left( \frac{ \log \log d}{\log \log \log d} \right)$  for the randomized case~\cite{chakraOptimal}.

Our work is in the spirit of the program initiated by Motwani, Naor and Panigrahy~\cite{motwanilower}, who analyze a random walk in the Hamming cube to lower bound the LSH quality parameter $\rho$ as $\frac{1}{2c}$ ($c$ is the separation between near and far points). O'Donnell, Wu and Zhou~\cite{O'Donnell:2014:OLB:2600088.2578221} later tighten this to $\frac{1}{c}$. Panigrahy, Talwar and Wieder~\cite{geometric} use the Boolean noise operator to simulate perturbations on the Hamming cube, and use hypercontractivity to show that these Hamming balls touch many cells of a data structure and obtain space-query trade off cell probe lower bounds. They then extend these to broader classes of metric spaces with certain isoperimetric properties of vertex and edge expansion~\cite{metricexpand}.  This is not a comprehensive survey;~\cite{metricexpand} give a good overview of several of the known lower bounds. 

All of the above approaches use Fourier analysis on boolean functions over the hypercube. This is a vast literature that we will not survey here: the reader is pointed to Ryan O'Donnell's lecture notes~\cite{ryan2013}. In particular, we make use of results by Keller~\cite{kellers} and Ahlberg \etal~\cite{biasedcontract} on the analysis of the noise operator in biased spaces.

\section{Bregman Divergences And The Bregman Cube}
\label{sec:bregman-divergences}

We start with some definitions.  Let $\phi: M\subset \reals^d \to \reals$ be a \emph{strictly convex} function that is differentiable in the relative interior of $M$. The \emph{Bregman divergence} $\breg$ is defined as 
\[ \breg(x,y) = \phi(x) - \phi(y) - \langle \nabla \phi(y), x-y\rangle. \]
An important subclass of Bregman divergences are the \emph{decomposable} Bregman divergences. Suppose $\phi$ has domain $M = \prod_{i=1}^d M_i $ and can be written as $\phi(x) = \sum_{i=1}^d \phi_i(x_i)$, where $\phi_i :M_i \subset \reals \to \reals$ is also strictly convex and differentiable in relint($M_i$). Then 
\[ \breg(x,y) = \sum_{i=1}^d D_{\phi_i}(x_i, y_i) \]
is a \emph{decomposable} Bregman divergence. 

 Table \ref{tab:breg-examples} illustrates some of the commonly used ones.
\ctable[
    caption = Commonly used Bregman divergences ,
    label = tab:breg-examples,
    pos = htbp
]{c|c|c|c}{
\tnote{The Mahalanobis distance is technically not decomposable, but is a linear transformation of a decomposable distance}
\tnote[b]{($S^d_{++}$ denotes the cone of positive definite matrices)}
}{
    Name & Domain & $\phi$ & \breg(x,y)\\ \hline
    $\ell_2^2$ & $\reals^d$ & $\frac{1}{2}\|x\|^2$  & $\frac{1}{2}\|x - y \|^2_2$ \\
    Mahalanobis\tmark & $\reals^d$ & $\frac{1}{2} x^\top Q x$& $\frac{1}{2}(x-y)^\top Q (x-y)$\\
    Kullback-Leibler & $\reals^d_+$& $\sum_i x_i \log x_i$& $\sum x_i \log \frac{x_i}{y_i} - x_i + y_i$\\
    Itakura-Saito & $\reals^d_+$& $-\sum_i \log x_i$& $\sum \Bigl( \frac{x_i}{y_i} - \log \frac{x_i}{y_i} - 1\Bigr)$\\
    Exponential & $\reals^d$& $\sum_i e^{x_i}$& $\sum e^{x_i} - (x_i - y_i +1)e^{y_i} $\\
    Bit entropy & $[0,1]^d$ & $\sum_i x_i \log x_i + (1-x_i) \log(1-x_i)$ & $\sum x_i \log \frac{x_i}{y_i} + (1-x_i) \log \frac{1-x_i}{1-y_i} $\\
    Log-det & $S^d_{++}\tmark[b]$& $\log \det X$& $\langle X, Y^{-1} \rangle - \log \det XY^{-1} - N $\\
    von Neumann entropy & $S^d_{++}$& $\text{tr} (X \log X - X)$& $ \text{tr}(X (\log X - \log Y) - X + Y)$\\ \hline
}

Finally, we define the special case of a \emph{uniform} Bregman divergence which is a decomposable $\breg$ where all the $\phi_i, 1 \leq i \leq d$ are identical. In this case, we simply refer to each $\phi_i$ as $\phi_{\reals} : \reals \to \reals$
and we have that $\phi(x) = \sum_{i=1}^d \phi_{\reals}(x_i)$. Note that most commonly used Bregman divergences are uniform, including the Kullback-Leibler, Itakura-Saito, Exponential distance and Bit entropy.  In what follows we will limit ourselves to uniform Bregman divergences.

\paragraph{Quantifying asymmetry.}
\label{sec:quant-asymm}

It is clear from the definition of $\breg$ that in general $\breg(x,y) \ne
\breg(y,x)$. In what follows, we define the \emph{measure of asymmetry} as $\mu = \max_{x,y \in M} \frac{\breg(x,y)}{\breg(y,x)}$.

By construction, $\mu \ge 1$. But it is not arbitrary: rather, it is a function
of the generating convex function $\phi$ and the domain over which it is defined. 
To see this, note that the Bregman divergence $\breg(x,y)$ can be viewed as the error
(evaluated at $x$) incurred in replacing $\phi$ by its first-order approximation
$\tilde{\phi}(x) = \phi(y) + \langle \nabla \phi(y), x - y\rangle$. By the
Lagrange mean-value theorem, this error can be written as the quadratic form
$\breg(x,y) = \langle x-y, \nabla^2 \phi(c) (x-y)\rangle$ where $\nabla^2\phi$ is the Hessian
associated with $\phi$, and $c = c(x,y)$ is some point on the line connecting $x$ and
$y$. Note that this point $c$ will general be different to the point $c'$ that
achieves equality when measuring $\breg(y,x)$. 

Thus $\frac{\breg(x,y)}{\breg(y,x)}$ is bounded by the ratio of the maximum to minimum eigenvalue that the Hessian $\nabla^2\phi$ realizes over the domain $M$. In particular, $\mu$ need not depend on the number of points
$n$ or the dimension $d$.  

Most prior work on algorithms with Bregman divergences focus on violations of the
triangle inequality, rather than symmetry. However, the different variants of
$\mu$ defined there all relate in similar fashion to the ratio of eigenvalues of
the Hessian of $\phi$, and can be shown to be loosely equivalent to each other; in the sense that if the measure of asymmetry $\mu$ grows without bound, so do these measures.

\paragraph{The Bregman Cube.}
\label{sec:bregman-cube}

We introduce a new structure, the \emph{Bregman cube} $B_{\phi} =
\{0, 1 \}^d$ alongwith asymmetric distance measure $D$.  This is combinatorially equivalent to a regular Hamming cube, but
where distances of $1$ and $\mu$ are associated with flipping a bit from $1$ to
$0$ and $0$ to $1$ respectively. More precisely, given $D: \{0,1\}^d \times \{0,1\}^d \to \reals$ and asymmetry parameter $\mu$, we stipulate:

\begin{equation}
D(x,y) = \mu |\{i : y_i > x_i \} | +  | \{j : x_j > y_j \} |,\forall x,y \in \{0,1 \}^d .
\end{equation}
We note now how $B_{\phi}$ and the associated measure $D$ can be induced from a uniform Bregman divergence $D_{\phi}$ on 
$\reals^d$. Let the asymmetry parameter $\mu$ of $D_{\phi_{\reals}}$ be realized by points $a$, $b \in \reals$. W.l.o.g. (due to scaling) assume $D_{\phi_{\reals}} (b, a) =1$ and $D_{\phi_{\reals}}(a,b) = \mu$. Then distances on $B_{\phi}$  with parameter $\mu$ correspond exactly to those on $\{a , b \}^d \subset \reals^d$ under $D_{\phi}$.  
\begin{figure}[H]
  \begin{center}
    \includegraphics[scale = 0.4]{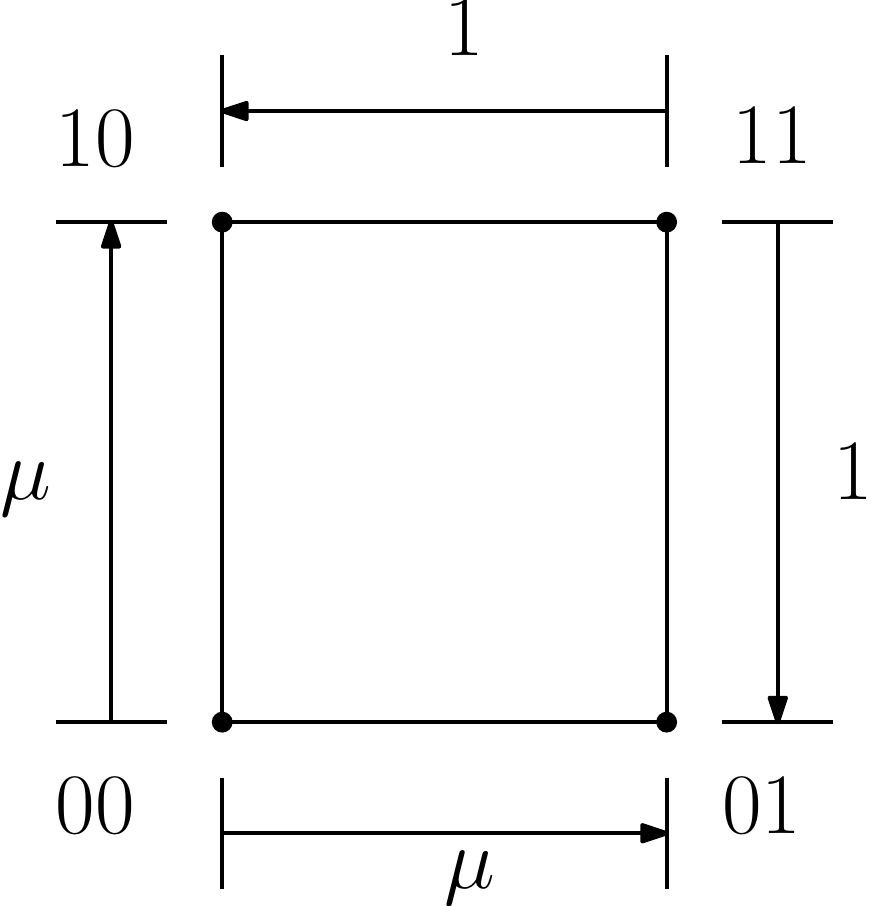}
  \end{center}
  \caption{Asymmetric distances.}
  \label{cube}
\end{figure} 
We use standard notation to define a ``$c$-approximate nearest neighbor" ($c$-ANN) for query point $q$ and point set $P$ under $D$. Namely, let $p' \in P$ be a ``$c$-approximate nearest neighbor" to $q$ if $D(q,p') \leq c \min_{p \in P} D(q,p)$. We also fix $q$ to be the first argument in the asymmetric distance $D$ to maintain consistency.

\section{Preliminaries of Fourier analysis}
\label{sec:prel-four-analys}
\subsection{Basis and Fourier coefficients.}

Let the \emph{$p$-biased measure} $\kappa_p = (p \delta_{\{1\}} + (1-p)\delta_{\{0\}})^{\otimes d}$ be the product measure defined over the hypercube $\{0,1\}^d$. Note that for $p = \nicefrac{1}{2}$ this is the uniform measure over binary strings of length $d$.  All expectations and norms are implicitly defined according to the choice of measure $\kappa_p$ as follows:

For any function $f: \{0,1\}^d \to \reals$, let 
\begin{enumerate}
\item{$E_p[f] = \sum_{x \in \{0,1\}^d} \kappa_p(x) f(x)$.}
\item{$\norm{f}{p}{j} = \left( \sum_{x \in \{0,1\}^d} \kappa_p(x) f(x)^j  \right)^{1/j}$.}
\end{enumerate}
It is well known that there is a natural Fourier basis for the space $F_p$ of all functions $f: \{0,1\}^d \to \reals$ with respect to $\kappa_p$ (see for example, \cite{biasedcontract, kellers, ryan2013}). For each $x \in \{0,1\}^d$ and $i \in [d]$ let
 \[\chi_i^p =  \begin{cases} 
      \sqrt{\frac{p}{1-p}} & x_i = 0 \\
      -\sqrt{\frac{1-p}{p}} & x_i = 1 \\
   \end{cases}
\]
The set of $\chi_i^p$ corresponds to a bit wise parity basis which we can extend to arbitrary $S \subset [n]$ as 
$\chi_S^p(x) = \prod_{i \in S} \chi_i^p(x)$. The resulting $\chi_S^p$ form an orthonormal basis of  $F_p$. That is, 
we can define the Fourier coefficient corresponding to a $S \subset [n]$ as:

\begin{equation}
\hat{f}^{(p)}(S) = \sum_{x \in \{0,1\}^d} \kappa_p(x) f(x) \chi_S^p(x).
\end{equation}

And hence obtain that
\begin{equation}\label{paritybasis}
f = \sum_{S \subseteq [n]} \hat{f}^{(p)} (S) \chi_S^p.
\end{equation}

The orthonormality of the $\chi_S^p$ immediately yields the Parseval identity 
\begin{equation}\label{parsev}
\norm{f}{p}{2}^2 = E_p[f^2] = \sum_{S \subseteq [n]} \left( \hat{f}^{(p)}(S) \right)^2. 
\end{equation}

See \cite{ryan2013} for a full discussion of the parity basis. We note that wherever we drop the superscript and simply write $\hat{f}(S)$ or $\chi_S$, we intend $\hat{f}^{(1/2)}(S)$ and $\chi_S^{\frac{1}{2}}$ respectively.  

\subsection{Noise operator and hypercontractivity.}
For $x \in \{0,1\}^d$, let $y$ be the random variable obtained by flipping each
bit of $x$ with probability $p$. The \emph{noise operator} $T_{\delta} f$ for a
function $f$ is defined as the expectation of $f$ over $y$ of $T_{\delta} f(x) = E_y[f(y)]$:
\[ T_{\delta} f(x) = E_y[f(y)]. \]
In the case of the uniform measure $\kappa_{\frac{1}{2}}$, $T_{\delta} f$ can be
written as $T_{\delta} f = \sum_{S} (1-2 \delta)^{|S|} \hat{f}^{(1/2)}(S) \chi_S$.

More generally, given a function $f$ and choice of measure $\kappa_p$ we define the operator
\begin{equation}
\label{eq:2}
\tau_{\delta} f = \sum_{S} \delta^{|S|} \hat{f}^p (S) \chi_S.
\end{equation}
And we note that $T_{\delta} = \tau_{1-2 \delta}$ for the uniform measure.
\begin{theorem}[Hypercontractivity\cite{ryan2013}]
\label{standardhyper}
\begin{equation}
\norm{\tau_\delta f}{\frac{1}{2}}{2} \leq \norm{f}{\frac{1}{2}}{1+\delta^2}.
\end{equation}
\end{theorem}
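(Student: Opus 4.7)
The plan is to establish this hypercontractivity statement via the classical two-step Bonami--Beckner strategy: first prove a single-coordinate (``two-point'') version, then tensorize to dimension $d$.

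For the two-point base case, take $f: \{0,1\} \to \reals$ under the uniform measure and write $f = a + b\chi$, where $\chi$ is the $\pm 1$-valued Rademacher character, $a = \hat{f}(\emptyset)$, and $b = \hat{f}(\{1\})$. Then $\tau_\delta f = a + \delta b \chi$, so Parseval gives $\|\tau_\delta f\|_2^2 = a^2 + \delta^2 b^2$, while direct computation gives $\|f\|_{1+\delta^2}^{1+\delta^2} = \tfrac{1}{2}(|a+b|^{1+\delta^2} + |a-b|^{1+\delta^2})$. After rescaling to $a = 1$ and $|b| = t \in [0,1]$, the inequality reduces to the real-analytic statement
\[
(1 + \delta^2 t^2)^{(1+\delta^2)/2} \leq \tfrac{1}{2}\bigl[(1+t)^{1+\delta^2} + (1-t)^{1+\delta^2}\bigr].
\]
I would verify this by expanding both sides as power series in $t$: the odd powers on the right cancel by symmetry, and a term-by-term comparison of coefficients of $t^{2k}$ reduces to an elementary inequality between binomial expansions weighted by powers of $\delta^2$. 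This is the only genuine analytic step in the entire argument.

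For the tensorization, I induct on $d$. Writing $x = (x', x_d)$ and decomposing $f(x) = g(x') + \chi_d(x_d) h(x')$, the noise operator factors as $\tau_\delta f = \tau'_\delta g + \delta \chi_d\, \tau'_\delta h$, where $\tau'_\delta$ acts on $\{0,1\}^{d-1}$. Applying the two-point inequality in the $x_d$-coordinate (with $\tau'_\delta g(x')$ and $\tau'_\delta h(x')$ playing the roles of $a$ and $b$) gives a pointwise bound of the $L^2$ norm over $x_d$ by an $L^{1+\delta^2}$ norm over $x_d$. Taking $L^2$ norms in $x'$ of both sides, then invoking Minkowski's integral inequality to swap the $L^{1+\delta^2}$ and $L^2$ integrations (valid for $\delta \in [0,1]$ since $2 \geq 1 + \delta^2$), and finally applying the inductive hypothesis for $\tau'_\delta$ slice-by-slice in $x_d$, closes the induction.

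The only substantive obstacle is the two-point inequality; tensorization is formal once that estimate is in hand. Alternative routes are available---Gross's two-point log-Sobolev inequality combined with a Bakry--Émery-style semigroup argument, or a central limit reduction to Nelson's Gaussian hypercontractivity theorem---but the direct power-series comparison is arguably the most self-contained proof and is the standard one presented in \cite{ryan2013}.
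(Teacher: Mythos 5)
The paper states this as a cited result from O'Donnell's lecture notes \cite{ryan2013} and does not supply its own proof, so there is nothing in the paper to compare against directly. Your argument is the classical Bonami--Beckner proof — two-point $(2,1+\delta^2)$-hypercontractivity established by a power-series coefficient comparison, followed by tensorization via Minkowski's integral inequality — which is precisely the proof appearing in the cited reference, and it is correct in outline.

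One small step you elide: the reduction to $|b| \le |a|$, i.e.\ to $t \in [0,1]$, is not merely a rescaling. The binomial series for $(1\pm t)^{1+\delta^2}$ only converges for $|t|\le 1$, so for $|b|>|a|$ you must first observe that the function $\tilde f = b + a\chi$ satisfies $\|\tilde f\|_p = \|f\|_p$ for every $p$ while
\[
\|\tau_\delta f\|_2^2 = a^2 + \delta^2 b^2 \;\le\; b^2 + \delta^2 a^2 = \|\tau_\delta \tilde f\|_2^2
\]
whenever $|a|\le |b|$ and $\delta\le 1$, which lets you pass to $\tilde f$ and land back in the case $t\le 1$. With that observation supplied, the two-point bound and the tensorization close cleanly, exactly as in \cite{ryan2013}.
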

The above result holds for the uniform measure space ($p = \nicefrac{1}{2}$) but can also be considered in 
general $p$-biased measure spaces. The hypercontractivity problem in this context is to find $C(p,\delta)$ such that 
$\norm{\tau_\delta f}{p}{2} \leq \norm{f}{p}{1+ C(p,\delta) }$. Partial results were obtained by Talagrand~\cite{talagrand}, Friedgut
~\cite{friedgut} and Kindler~\cite{kindler}, whereas stronger bounds were obtained by Diaconis and Saloff-Coste~\cite{diaconis} and the optimal known value of $C(p,\delta)$ was obtained by Oleskiewicz~\cite{olesz}. 

We prefer the following formulation of a bound on $C(p,\delta)$ by Keller~\cite{kellers} due to convenience in some algebraic cancellations:
\begin{theorem}\label{biasedhyper}
Let $\bar{p} = min(p, 1-p)$, where $0 \leq p \leq 1$. Then for any $ \delta \geq 0$, s.t.  $\delta^2 \sqrt{\frac{\bar{p} \lfloor \log \frac{1}{\bar{p}} \rfloor}{1-\bar{p}}} \leq 1$, we have \footnote{In the remainder of the paper we drop the floor arguments, which do not affect any of the asymptotics of our result.}:
\begin{equation}
\norm{\tau_\delta f}{p}{2} \leq \norm{f}{p}{1+ \delta^2 \left(1-\bar{p} \right) / \left(\bar{p} \lfloor \log 1/\bar{p} \rfloor \right) }.
\end{equation}
\end{theorem}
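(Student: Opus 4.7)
The theorem is cited as a result of Keller, so the plan below sketches how one would reprove it from scratch rather than re-derive brand new machinery.

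\textbf{Tensorize first.} Both the noise operator $\tau_\delta$ and the norms $\norm{\cdot}{p}{q}$ respect the product structure of $\kappa_p = \mu_p^{\otimes d}$, where $\mu_p = p\delta_{\{1\}} + (1-p)\delta_{\{0\}}$. My first step would be to reduce to the one-dimensional case by induction on $d$: if $\norm{\tau_\delta f}{p}{2} \leq \norm{f}{p}{1+q}$ holds for every $f: \{0,1\} \to \reals$ under $\mu_p$, then a telescoping argument using Minkowski's integral inequality on the product measure promotes it to arbitrary dimension. This is routine and I would only recall it briefly.

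\textbf{Reduce to a two-point inequality.} In dimension one every function is of the form $f = a + b\, \chi_1^p$, so $\tau_\delta f = a + \delta b\, \chi_1^p$ and $\norm{\tau_\delta f}{p}{2}^2 = a^2 + \delta^2 b^2$. The right hand side is $\norm{f}{p}{1+q} = \bigl( (1-p) |f(0)|^{1+q} + p |f(1)|^{1+q}\bigr)^{1/(1+q)}$, which after substituting the parity-basis values of $\chi_1^p$ becomes a concrete inequality in the real parameters $a, b, p, \delta, q$. By homogeneity I can normalize $a=1$ and reduce to a single-variable inequality $F_{p,\delta,q}(b) \geq 0$ for all admissible $b$.

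\textbf{Analyze the one-variable inequality.} The natural move is to expand both sides in $b$ around $b=0$. To second order one recovers exactly the unbiased Bonami-Beckner condition $q \geq \delta^2$; the interesting behavior appears only at higher orders because in the biased regime the third moment of $\chi_1^p$ is nonzero and scales like $(1-2p)/\sqrt{p(1-p)}$. This asymmetry in odd moments is precisely what forces $q$ to shrink like $(1-\bar p)/(\bar p \log(1/\bar p))$. I would handle the higher-order residuals either by direct calculus --- showing that $b \mapsto \log F_{p,\delta,q}(b)$ is monotone on the relevant domain when $q$ is at most the claimed value --- or, more cleanly, by invoking that $\mu_p$ satisfies a logarithmic Sobolev inequality with constant of order $\bar p \log(1/\bar p)/(1-\bar p)$ and applying Gross's equivalence between LSI and hypercontractivity to read off the exponent.

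\textbf{Main obstacle.} Tensorization and the second-order expansion are easy; the delicate part is extracting the correct scaling in $\bar p$ without losing logarithmic factors. Concretely, one must control tail events of the heavily skewed measure $\mu_p$, which is why the natural argument routes through moments of order $\Theta(\log(1/\bar p))$. The quantitative side condition $\delta^2 \sqrt{\bar p \log(1/\bar p)/(1-\bar p)} \leq 1$ is exactly what makes the Gross-style ODE/entropy argument close: it ensures the contraction step in Gross's differential inequality stays inside the regime where the LSI constant is tight. This is the step where I would expect to spend almost all of the technical effort, and where Keller's formulation gains its algebraic convenience over the equivalent formulations of Oleszkiewicz and Diaconis-Saloff-Coste.
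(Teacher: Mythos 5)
The paper does not prove this statement: it is quoted as Keller's result and cited as \cite{kellers}, so there is no in-paper argument to compare against. I will therefore assess your sketch on its own terms.

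The tensorization step and the reduction to a two-point inequality are standard and correct. Your observation that the second-order Taylor expansion around $b = 0$ reproduces the unbiased condition $q \geq \delta^2$, and that the nonvanishing skewness of $\chi_1^p$ (whose third moment is $(2p-1)/\sqrt{p(1-p)}$) is what degrades the exponent, is also the right intuition. But the two concrete routes you offer for actually closing the two-point inequality do not work as stated. The Gross/LSI route produces the wrong functional form: a log-Sobolev inequality with constant $\alpha$ yields hypercontractivity of the shape $\| \tau_\delta f \|_2 \leq \| f \|_{1 + \delta^{2\alpha}}$, so the biased constant appears in the \emph{exponent} of $\delta$, whereas Keller's statement has $\| \tau_\delta f \|_2 \leq \| f \|_{1 + \delta^2 (1-\bar p)/(\bar p \log(1/\bar p))}$, with the constant appearing \emph{multiplicatively}. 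These agree at $p = \tfrac{1}{2}$ but diverge as $\bar p \to 0$; moreover, the Diaconis--Saloff-Coste LSI constant is $\Theta(1/\log(1/\bar p))$ while Keller's coefficient is the reciprocal of $\Theta(\bar p \log(1/\bar p))$, which are not even the same order in $\bar p$. So Gross's equivalence is not a drop-in replacement here, and in particular your claim that the side condition $\delta^2 \sqrt{\bar p\log(1/\bar p)/(1-\bar p)} \le 1$ is ``exactly what makes the Gross-style ODE/entropy argument close'' is not substantiated. The direct-calculus route (essentially Oleszkiewicz-style) is the one that can be pushed through, but you leave the central step --- controlling the odd-order Taylor terms uniformly over the admissible range of $b$, including where $1 + b\chi_1^p$ approaches zero --- entirely unfilled; that is where all the technical content of the theorem lives.

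Separately, it is worth noting that the paper cites \cite{kellers} precisely because Keller's own derivation takes a third route: a reduction that realizes the $p$-biased measure as a pushforward of the uniform measure on a larger cube, so that the unbiased Bonami--Beckner inequality can be transferred directly, bypassing the two-point analysis. Even a completed version of your direct-calculus sketch would therefore be a genuinely different argument from the one the paper is pointing to, not a reconstruction of it.
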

 Observe that if we set $p = \nicefrac{1}{2}$ the general expression described in Theorem
\ref{biasedhyper} reduces to the special case of Theorem \ref{standardhyper}. We also note that the constants in our main result of Theorem~\ref{thm:main} may improve slightly if we use, for instance, the optimal value of the hypercontractivity parameter from Oleskiewicz~\cite{olesz}; however the asymptotics are unaffected. 
\section{Isoperimetry in the directed hypercube}
\label{sec:asymm-bregm-cube}

\subsection{The asymmetric noise operator.}

For any point $x \in \hamming^d$, let $\nu_{p_1, p_2}(x)$ be the distribution
obtained by independently flipping each $0$ bit of $x$ to $1$ with
probability $p_1$ and each $1$ bit of $x$ to $0$ with probability $p_2$. 

\begin{defn}[Asymmetric noise operator]
  The asymmetric noise operator $R_{p_1, p_2}$ is an operator defined on
  functions over $\{0,1\}^d$ and is defined as
\[[R_{p_1, p_2} f](x) = E_{y \sim \nu_{p_1, p_2}(x) }[f(y)]. \]
\end{defn}
We note that Benjamini, Kalai and Schramm~\cite{benjamin} study a version of this asymmetric noise in the context of percolation crossings, and that the formulation of $R_{p,0}$ by \citet{biasedcontract} as a Fourier operator is highly useful in our analysis.
We observe first that if we set $p = p_1 = p_2$, then $R_{p_1, p_2} = T_p$. There is in fact a
stronger relationship between the two operators. 
\begin{theorem}\label{estructure}
If $p_1 \geq p_2$, $p_1 \leq 1 -  p_2$ then $R_{p_1,p_2} = T_{p_2} R_{\frac{p_1-p_2}{1-2 p_2},0}. $
\end{theorem}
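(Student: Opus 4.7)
The plan is to reduce this identity to a one-bit computation and then verify it by an elementary matrix calculation. First, I would observe that each of the three operators involved --- $R_{p_1,p_2}$, $T_{p_2}$, and $R_{\alpha,0}$ with $\alpha := (p_1-p_2)/(1-2p_2)$ --- acts on $\{0,1\}^d$ as a $d$-fold tensor product of a single-bit operator, because the defining noise distributions $\nu_{p_1,p_2}$, $\nu_{p_2,p_2}$, and $\nu_{\alpha,0}$ all flip coordinates \emph{independently}. Since operator composition distributes over tensor products, $(A\otimes\cdots\otimes A)(B\otimes\cdots\otimes B) = (AB)\otimes\cdots\otimes(AB)$, it suffices to verify the claimed identity in the single-bit case $d=1$.

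For $d=1$ each operator is a $2\times 2$ row-stochastic matrix: the rows of $R_{p_1,p_2}$ are $(1-p_1,\,p_1)$ and $(p_2,\,1-p_2)$; those of $T_{p_2}$ are $(1-p_2,\,p_2)$ and $(p_2,\,1-p_2)$; and those of $R_{\alpha,0}$ are $(1-\alpha,\,\alpha)$ and $(0,\,1)$. I would read $T_{p_2}R_{\alpha,0}$ as the Markov-chain composition ``first sample a point from $\nu_{\alpha,0}(x)$, then apply $\nu_{p_2,p_2}$ to the result,'' multiply the corresponding transition matrices in that order, and read off the single-bit transition probabilities. The $1\to 1$ entry comes out to $1-p_2$ automatically, since $R_{\alpha,0}$ fixes $1$; matching the $0\to 1$ entry with $p_1$ yields the single equation $p_2+\alpha(1-2p_2)=p_1$, whose unique solution is exactly the stated $\alpha$. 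The two remaining entries then balance by row-stochasticity.

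Finally, I would confirm that the hypotheses $p_1\ge p_2$ and $p_1\le 1-p_2$ are tight for $\alpha$ to be a valid probability: the former gives $\alpha\ge 0$, the latter rearranges to $\alpha\le 1$, and together they force $p_2\le 1/2$, ensuring $1-2p_2\ge 0$ (the degenerate case $p_2=1/2$ collapses to $p_1=1/2$ and the identity reduces to $T_{1/2}=T_{1/2}$). The only real obstacle is notational: one must commit to a convention for composing noise operators and parse the product $T_{p_2}R_{\alpha,0}$ accordingly. Once that convention is pinned down, the proof is a routine $2\times 2$ matrix multiplication with no genuine mathematical difficulty; a Fourier-analytic proof via the $p$-biased representation of $R_{\alpha,0}$ is available as a fallback but seems like overkill.
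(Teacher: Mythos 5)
Your proof is essentially the paper's argument with the one‑bit calculation made fully explicit: the paper also matches the per‑bit transition probabilities, setting up $p_2=p'$ and $p_1=p''(1-p')+(1-p'')p'$, which are exactly your $1\to 1$ and $0\to 1$ entries (and the second rearranges to your $p_2+\alpha(1-2p_2)=p_1$). The notational caveat you flag is genuine and is present in the paper too --- both you and the paper read $T_{p_2}R_{\alpha,0}$ as the Markov-chain composition ``first $\nu_{\alpha,0}$, then $\nu_{p_2,p_2}$'' rather than as the left-acting function-operator product, which would reverse the matrix order --- and your two small additions (the tensorization reduction to $d=1$ and the check that the hypotheses force $\alpha\in[0,1]$) are details the paper leaves implicit.
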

\begin{proof}
The overall transition probabilities from a $0$ to a $1$ and vice versa must
match on both sides of the equation. Therefore if we set $R_{p_1,p_2} = T_{p'} R_{p'',0}$, then the following
two equations must hold true. 
\begin{align*}
p_2 &= p' \\
p_1 &= (p'')(1-p') + (1-p'')p'
\end{align*}
Solving this system yields us $p' = p_2$ and $p'' = \frac{p_1-p_2}{1-2 p_2}$. 
\end{proof}

Our goal is to prove hypercontractivity for $R_{p_1, p_2}$. By the decomposition
given in Theorem~\ref{estructure} and known hypercontractivity bounds for $T_p$,
it will suffice to study how $R_{p,0}$ affects the Fourier coefficients of $f$.  This turns out to be intimately related
to the $p$-biased measure $\kappa_p$. We will combine this with standard hypercontractivity
results for $T_p$ to obtain the desired bound. (Bounds for $R_{0,p}$ follow by easy analogy and are also presented, although not needed for our main results.) Since we are looking at asymptotic bounds, it will be best to think of both $p_1$ and 
$p_2$ as smaller than a fixed constant, say $\frac{1}{100}$. 

\subsection{Hypercontractivity of $R_{p,0}$.}
\label{sec:hyperc-r_p-0}
Suppose we are given a function $f : \{ 0,1 \}^d \to \reals$. 

\citet[Lemma 4.2]{biasedcontract} show that the Fourier coefficients of the \emph{asymmetric}
perturbation of $f$ in a \emph{uniform} space are related to the
Fourier coefficients of $f$ in a \emph{biased} space. (For the sake of completeness, we present their proof in 
Section \ref{sec:ahlberg}.)

\begin{theorem}[\citep{biasedcontract}]\label{thm:transf}
  \begin{align*}
    \widehat{R_{0,p}f}^{\left(1/2 \right)}(S) &= \left( \sqrt{\frac{1-p}{1+p}} \right)^{|S|} \hat{f}^{ \left(\frac{1-p}{2} \right)}(S).\\
    \widehat{R_{p,0} f}^{\left( 1/2 \right)}(S) & = \left(
      \sqrt{\frac{1-p}{1+p}} \right)^{|S|} \hat{f}^{ \left(\frac{1+p}{2}
      \right)}(S).
  \end{align*}
\label{cited}
\end{theorem}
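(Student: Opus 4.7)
The plan is a direct Fourier computation in which, instead of integrating over the input $x$, I condition on the noisy output $y$ and exploit the fact that the marginal distribution of $y$ is precisely a biased measure on the cube. Throughout, I work with the asymmetric noise kernel $\nu_{0,p}$ that flips each $1$--bit of $x$ to $0$ independently with probability $p$, and I take $x$ uniform on $\{0,1\}^d$. By definition of the Fourier coefficient in the uniform measure,
\[
\widehat{R_{0,p}f}^{(1/2)}(S) \;=\; E_{x \sim \kappa_{1/2}} \Bigl[ E_{y \sim \nu_{0,p}(x)}[f(y)] \cdot \chi_S^{1/2}(x) \Bigr] \;=\; E_{(x,y)}\!\bigl[ f(y)\, \chi_S^{1/2}(x) \bigr],
\]
where the joint law of $(x,y)$ is determined by the noise kernel.

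Next I would marginalize out $x$. For each coordinate, $P(y_i = 1) = \tfrac{1}{2}(1-p)$ and $P(y_i = 0) = \tfrac{1}{2}(1+p)$, so the marginal of $y$ is exactly $\kappa_{(1-p)/2}$. The coordinates are independent given $y$, so I only need the one-dimensional conditional expectation $E[\chi_i^{1/2}(x) \mid y_i]$. If $y_i = 1$ then $x_i = 1$ deterministically and the value is $-1$; if $y_i = 0$ then $x_i$ is $0$ or $1$ with conditional probabilities $\tfrac{1}{1+p}$ and $\tfrac{p}{1+p}$, giving $\tfrac{1-p}{1+p}$. A short check against the definition of the biased character shows that in both cases this equals $\sqrt{(1-p)/(1+p)} \cdot \chi_i^{(1-p)/2}(y_i)$. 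Multiplying over $i \in S$ yields
\[
E\bigl[\chi_S^{1/2}(x) \,\big|\, y\bigr] \;=\; \left(\sqrt{\tfrac{1-p}{1+p}}\right)^{|S|} \chi_S^{(1-p)/2}(y).
\]

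Plugging this back into the joint expectation gives
\[
\widehat{R_{0,p}f}^{(1/2)}(S) \;=\; \left(\sqrt{\tfrac{1-p}{1+p}}\right)^{|S|} E_{y \sim \kappa_{(1-p)/2}}\!\bigl[ f(y)\, \chi_S^{(1-p)/2}(y) \bigr] \;=\; \left(\sqrt{\tfrac{1-p}{1+p}}\right)^{|S|} \hat{f}^{((1-p)/2)}(S),
\]
which is the first identity. The second identity for $R_{p,0}$ follows by the same argument with the roles of $0$ and $1$ exchanged: now each bit of $y$ is $1$ with probability $\tfrac{1+p}{2}$, so $y \sim \kappa_{(1+p)/2}$, and the per-coordinate conditional $E[\chi_i^{1/2}(x)\mid y_i]$ again factors as $\sqrt{(1-p)/(1+p)} \cdot \chi_i^{(1+p)/2}(y_i)$.

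The only non-routine step is the per-coordinate verification that the two values of $E[\chi_i^{1/2}(x)\mid y_i]$ match the biased character up to the common scalar $\sqrt{(1-p)/(1+p)}$; once that algebraic identity is in hand, independence across coordinates and linearity of expectation do the rest. This is essentially bookkeeping, and I do not anticipate any real obstacle — the main thing to be careful about is using the correct normalization of $\chi_i^p$ as defined in the preliminaries, so that the scalar on the $y_i = 1$ and $y_i = 0$ branches collapses to the same factor $\sqrt{(1-p)/(1+p)}$.
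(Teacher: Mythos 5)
Your proof is correct and follows essentially the same route as the paper's own argument in Appendix~\ref{sec:ahlberg}: exchange the order of integration so that you condition on the noisy output, observe that the output marginal is the biased measure $\kappa_{(1-p)/2}$ (resp.\ $\kappa_{(1+p)/2}$), compute the per-coordinate conditional expectation of the uniform character via Bayes' rule, and recognize it as $\sqrt{(1-p)/(1+p)}$ times the biased character. The only cosmetic difference is notational ($y$ versus the paper's $z$); the decomposition, the per-bit calculation, and the final matching against $\chi_i^{p'}$ are identical.
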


Using this, we obtain the following result relating the asymmetric operator
$R_{p,0}$ and $R_{0,p}$ to the symmetric operator $\tau_\delta$ in a biased space.

\begin{theorem}\label{ineq1}
\begin{equation}
\norm{R_{0,p} f}{\frac{1}{2}}{2} = \norm{\tau_{\sqrt{\frac{1-p}{1+p}}}
  f}{\frac{1-p}{2}}{2}.
\end{equation}
\begin{equation}\label{eq:1}
\norm{R_{p,0} f}{\frac{1}{2}}{2} = \norm{\tau_{\sqrt{\frac{1-p}{1+p}}} f}{\frac{1+p}{2}}{2}.
\end{equation}
\end{theorem}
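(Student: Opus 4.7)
The plan is to prove both equations by combining the Fourier transfer identity of Theorem~\ref{thm:transf} with Parseval's identity (equation~\eqref{parsev}) applied in the appropriate biased measure spaces. The computation is essentially the same for both equations, so I would write out the first carefully and then remark that the second follows by the symmetric argument with $(1-p)/2$ replaced by $(1+p)/2$.

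For the first equation, I would start by expanding the LHS using Parseval's identity with respect to the uniform measure $\kappa_{1/2}$:
\[
\norm{R_{0,p} f}{1/2}{2}^2 \;=\; \sum_{S \subseteq [d]} \bigl(\widehat{R_{0,p}f}^{(1/2)}(S)\bigr)^2.
\]
Next, I would substitute the identity from Theorem~\ref{thm:transf} to replace each uniform-measure Fourier coefficient of $R_{0,p}f$ with a scaled $((1-p)/2)$-biased Fourier coefficient of $f$:
\[
\sum_{S} \bigl(\widehat{R_{0,p}f}^{(1/2)}(S)\bigr)^2 \;=\; \sum_{S} \left(\frac{1-p}{1+p}\right)^{|S|} \bigl(\hat{f}^{((1-p)/2)}(S)\bigr)^2.
\]
Now set $\delta = \sqrt{(1-p)/(1+p)}$, so the factor becomes $\delta^{2|S|}$. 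By the definition of the symmetric noise operator $\tau_\delta$ in the $((1-p)/2)$-biased space (equation~\eqref{eq:2}), together with orthonormality of the biased parity basis $\chi_S^{(1-p)/2}$, Parseval's identity in that space yields
\[
\norm{\tau_\delta f}{(1-p)/2}{2}^2 \;=\; \sum_{S} \delta^{2|S|} \bigl(\hat{f}^{((1-p)/2)}(S)\bigr)^2,
\]
which matches the previous display. Taking square roots gives the first claimed equality.

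For the second equation the argument is identical, now invoking the second line of Theorem~\ref{thm:transf} so that the biased Fourier coefficients of $f$ live in the $((1+p)/2)$-biased space. The noise parameter $\delta = \sqrt{(1-p)/(1+p)}$ is the same in both cases because the scaling factor in Theorem~\ref{thm:transf} is the same.

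There is no real obstacle here: the theorem is a direct bookkeeping consequence of Theorem~\ref{thm:transf} combined with Parseval in the correct measure. The only thing to be careful about is that Parseval is being applied in \emph{two different} measure spaces on the two sides of the equation — uniform on the left, and biased on the right — so the Fourier coefficients and the basis functions $\chi_S^p$ that appear must be taken with respect to the matching measure. This is really the content of Theorem~\ref{thm:transf}: it tells us exactly which biased space the asymmetric noise operator ``lives in'' on the Fourier side.
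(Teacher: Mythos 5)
Your proof is correct and is exactly the paper's argument spelled out in detail: the paper's proof is a one-line remark that the result "follows by combining Parseval's identity with the definition of $\tau_\delta$," implicitly invoking Theorem~\ref{thm:transf} as you do. You simply expand this bookkeeping, correctly applying Parseval once in the uniform measure and once in the biased measure with the noise parameter $\delta = \sqrt{(1-p)/(1+p)}$.
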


\begin{proof}
  The proof follows by combining Parseval's identity with the definition of $\tau_\delta$ in Equation~(\ref{eq:2}).
\end{proof}

Theorem \ref{ineq1} does \emph{not} directly imply hypercontractivity for $R_{p,0}$
under the uniform measure. Instead it relates the $l_2$ norm of $R_{p,0}f$ to
the norm of $f$ in a biased measure space. Indeed there can be adversarial
choices of $f$ where the norm increases under perturbation by $R_{p,0}$. 

\paragraph{Example.}
Consider the function $f:\{0,1\} \to \reals$. Let $f(0) = 0$ and $f(1) = 1$.
Then $R_{p,0} f (0) = p$ and $R_{p,0} f(1) = 1$.   In particular $\|R_{p,0}
f \|_{2,1/2}^2 = \frac{p^2+1}{2} > \|f \|_{2, 1/2} = \frac{1}{2}$ which indicates no hypercontractivity.

We address this issue in two parts.  Firstly, we use the biased Bonami-Beckner
inequality (Theorem \ref{biasedhyper}) to relate the right-hand side of Eq.~(\ref{eq:1}) to
the norm of $f$. 

\begin{theorem}\label{ineq2}
\begin{equation} \norm{\tau_{\sqrt{\frac{1-p}{1+p}}} f}{ \frac{1-p}{2}}{ 2 } \leq \norm{f}{ \frac{1-p}{2}}{ 1 + \frac{1}{1 - \log (1-p)}   }. \end{equation}
\begin{equation} \norm{\tau_{\sqrt{\frac{1-p}{1+p}}} f}{ \frac{1+p}{2}}{ 2 } \leq \norm{f }{ \frac{1+p}{2}}{ 1 + \frac{1}{1 - \log (1-p)}   }. \end{equation}
\end{theorem}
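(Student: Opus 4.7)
\textbf{Proof plan for Theorem \ref{ineq2}.} The plan is a direct application of Keller's biased hypercontractivity (Theorem \ref{biasedhyper}) with carefully chosen parameters, with a small algebraic simplification of the resulting exponent. Since both inequalities have the same noise level $\delta = \sqrt{(1-p)/(1+p)}$, and their bias parameters $(1-p)/2$ and $(1+p)/2$ are complementary, they admit essentially the same proof.

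\textbf{First, set up the parameters.} For the first inequality, instantiate Theorem \ref{biasedhyper} with bias $q = (1-p)/2$; for the second, with bias $q = (1+p)/2$. In both cases $\bar{q} := \min(q, 1-q) = (1-p)/2$ since $p \in (0,1)$. This is the key observation that makes the two inequalities collapse to the same computation, because Keller's exponent depends only on $\bar{q}$ and $\delta$.

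\textbf{Second, compute the exponent.} Substituting $\delta^2 = (1-p)/(1+p)$ and $\bar{q} = (1-p)/2$ into the Keller exponent gives
\begin{equation*}
\delta^2 \cdot \frac{1-\bar{q}}{\bar{q} \log(1/\bar{q})} \;=\; \frac{1-p}{1+p} \cdot \frac{(1+p)/2}{(1-p)/2} \cdot \frac{1}{\log(2/(1-p))} \;=\; \frac{1}{\log(2/(1-p))}.
\end{equation*}
Taking logarithm to be base $2$ (consistent with the paper's convention), we have $\log(2/(1-p)) = 1 - \log(1-p)$, which produces exactly the exponent $1 + 1/(1 - \log(1-p))$ claimed on the right-hand side of both inequalities.

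\textbf{Third, check the applicability condition.} Keller requires $\delta^2 \sqrt{\bar{q}\log(1/\bar{q})/(1-\bar{q})} \leq 1$. Plugging in our values, this becomes $\bigl(\tfrac{1-p}{1+p}\bigr)^{3/2} \sqrt{\log(2/(1-p))} \leq 1$. For the regime of $p$ considered in the paper (we may take $p$ bounded away from $1$; in fact treating $p$ as smaller than a small constant, as mentioned in Section \ref{sec:asymm-bregm-cube}), this holds by a direct calculation since $\bigl(\tfrac{1-p}{1+p}\bigr)^{3/2}$ decreases faster than $\sqrt{\log(2/(1-p))}$ grows. The main (and only) obstacle I anticipate is bookkeeping around the log base convention and verifying this side-condition cleanly; both are routine, but the fact that the bound matches on the nose ($1 = \log_2 2$, giving the $1$ in the denominator $1 - \log(1-p)$) is the one place where things must line up exactly rather than up to constants.
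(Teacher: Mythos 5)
Your proof is correct and matches the paper's own argument exactly: both apply Keller's biased hypercontractivity (Theorem \ref{biasedhyper}) with $\delta = \sqrt{(1-p)/(1+p)}$ and $\bar{p} = (1-p)/2$, observing that the two claimed inequalities collapse to the same computation because $\min(q,1-q) = (1-p)/2$ for both bias parameters $q = (1\pm p)/2$. In fact your explicit verification of Keller's applicability condition $\delta^2\sqrt{\bar{p}\log(1/\bar{p})/(1-\bar{p})} \le 1$ is a step the paper's own proof omits, so your writeup is slightly more complete than the original.
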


\begin{proof}
We recall first the statement of Theorem \ref{ineq1}: $\norm{R_{p,0} f}{ \frac{1}{2}}{ 2 } = \norm{\tau_{\sqrt{\frac{1-p}{1+p}}} f}{ \frac{1+p}{2}}{ 2 }.$ We combine this with the biased hypercontractivity claim of Theorem~\ref{biasedhyper} which states:
\begin{equation*}
\norm{\tau_\delta f}{ \bar{p}}{ 2 } \leq \norm{f}{ \bar{p}}{ 1+ \frac{1-\bar{p}}{\bar{p} \lfloor \log \frac{1}{\bar{p}} \rfloor}\delta^2   }.
\end{equation*}
We note the $\bar{p}$ there is the smaller of the measures of $0$ or $1$ in the product space. Hence we plug in $\bar{p} = \frac{1-p}{2}$  and $\delta = \sqrt{\frac{1-p}{1+p}}$ to obtain:

\begin{align*}
& 1+  \delta^2 \frac{1-\bar{p}}{\bar{p}  \log \frac{1}{\bar{p}} }   \\
= & 1 + \left( \sqrt \frac{1-p}{1+p} \right)^2 \left( 1 - \frac{1 - p}{2} \right)   / \left(\frac{1-p}{2}  \log \left(1 / \frac{1-p}{2} \right) \right) \\
= & 1 + \left(\frac{1-p}{1+p} \right) \left( \frac{1+p}{1-p} \right) /  \frac{1}{\log \frac{2}{1-p}} \\
= & 1 + \frac{1}{1 - \log (1-p)}.
\end{align*}
The second result claimed in the theorem statement follows almost identically.
\end{proof}
The second and final part of the argument is to relate the norms of $f$ in the
unbiased and biased spaces. Recall that our ultimate aim is to bound $\norm{R_{p,0}f}{ \frac{1}{2}}{2} $ by $\norm{ f}{ \frac{1}{2}}{ 1 + \frac{1}{1- \log(1-p)} }$. 

Let us 
limit $f : \{0, 1 \}^d \to \{0 , 1 \}$ to take its support from the lower half of the Hamming cube, which we stipulate as $L = \{ x : 
\sum_i x_i \leq \frac{1}{2} d \}$. We can define the upper half of the Hamming cube $L$ analogously. Whenever we refer to a function $f_U$, this will be understood to have support only on the upper half of the Hamming cube, whereas $f_L$ will have support only on the lower half. 

\begin{theorem}\label{ineq3}
For any parameters $\delta > 1$  and $\frac{1}{2} \leq p \leq 1$, we have:
\begin{equation}
     \norm{f_L}{ p}{ \delta }^{\delta}  \leq  \norm{f_L}{ \frac{1}{2}}{ \delta }^{\delta}.
\end{equation}
And for $0 \leq p \leq \frac{1}{2}$, we have:

\begin{equation}
     \norm{f_U}{  p}{ \delta }^{\delta}  \leq  \norm{f_U}{ \frac{1}{2}}{ \delta }^{\delta}.
\end{equation}
\end{theorem}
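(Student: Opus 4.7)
The plan is to reduce both inequalities to a pointwise comparison of the product measures $\kappa_p$ and $\kappa_{1/2}$ on the relevant half-cube. Since $f_L$ is nonnegative and supported on $L$, we have
\[
\norm{f_L}{p}{\delta}^{\delta} = \sum_{x \in L} \kappa_p(x)\, f_L(x)^{\delta},
\]
and similarly for $\kappa_{1/2}$. Therefore it suffices to show the pointwise bound $\kappa_p(x) \leq \kappa_{1/2}(x)$ for every $x \in L$ whenever $p \geq 1/2$; summing against the nonnegative weights $f_L(x)^{\delta}$ then yields the first claim. The second claim reduces in exactly the same way to $\kappa_p(x) \leq \kappa_{1/2}(x)$ for $x \in U$ and $p \leq 1/2$.

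For the pointwise step, write $|x| = \sum_i x_i$ and compute
\[
\frac{\kappa_p(x)}{\kappa_{1/2}(x)} = (2p)^{|x|}\,(2(1-p))^{d-|x|} = (4p(1-p))^{|x|}\,(2(1-p))^{d-2|x|}.
\]
For $p \geq 1/2$ and $x \in L$, we have $4p(1-p) \leq 1$, $2(1-p) \leq 1$, and $d - 2|x| \geq 0$, so the ratio is at most $1$. The dual case is symmetric: refactor as $(4p(1-p))^{d-|x|}\,(2p)^{2|x|-d}$ and use $p \leq 1/2$ together with $x \in U$ (so that $2|x|-d \geq 0$) to again get a ratio at most $1$.

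Since the entire argument is termwise in the measure, there is no serious obstacle and no Fourier machinery is needed. The exponent $\delta$ does not enter at all, and the hypothesis $f_L, f_U \in \{0,1\}$ is used only to keep $f_L^{\delta}, f_U^{\delta}$ nonnegative; the same argument applies to any nonnegative function supported in the appropriate half-cube. Intuitively, the lemma just records the fact that biasing the product measure toward $1$ can only decrease its mass on the lower half of the cube (and symmetrically for the upper half), which is precisely what allows us to swap the biased $\kappa_p$ on the right-hand side of Theorem~\ref{ineq2} for the uniform $\kappa_{1/2}$ once we restrict $f$ to a half-cube.
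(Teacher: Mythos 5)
Your proof is correct and follows the same approach as the paper's: both reduce the inequality to the pointwise measure comparison $\kappa_p(x) \leq \kappa_{1/2}(x)$ on the relevant half-cube and then sum against nonnegative weights. The paper simply asserts this pointwise dominance without computation, whereas you verify it explicitly via the factorization of $\kappa_p(x)/\kappa_{1/2}(x)$; this is a welcome filling-in of detail rather than a different route.
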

\begin{proof}
The first inequality follows because points in the lower half of the hypercube have larger measure under the uniform distribution than under the $p$-biased distribution for $p > \frac{1}{2}$. The second claimed inequality follows by symmetry.
\end{proof}

By Theorems \ref{ineq1}, \ref{ineq2} and \ref{ineq3} we finally obtain that:
\begin{theorem}
\begin{equation}
  \norm{R_{0,p} f_U}{ \frac{1}{2} }{ 2 }   \leq    \norm{f_U}{ \frac{1}{2}}{ 1+ \frac{1}{1- \log (1-p)} } .     
\end{equation}

\begin{equation}
  \norm{R_{p , 0} f_L}{ \frac{1}{2} }{ 2 }  \leq    \norm{f_L}{ \frac{1}{2}}{ 1+ \frac{1}{1- \log (1-p)} }.     
\end{equation}
\end{theorem}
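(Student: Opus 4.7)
My proof would be a direct chain of the three inequalities established in Theorems \ref{ineq1}, \ref{ineq2}, and \ref{ineq3}, so the substantive task is simply to verify that the biased measures produced in the intermediate steps align correctly with the half-cube supports assumed in Theorem \ref{ineq3}. Since the conclusion is a hypercontractive-style bound of $\norm{R_{\cdot,\cdot}f}{1/2}{2}$ by a $\norm{f}{1/2}{q}$ norm with $q > 1$, and Theorems \ref{ineq1} and \ref{ineq2} already give such a bound \emph{in a biased space}, all that remains is the move from the biased to the uniform norm on the right-hand side, and this is exactly what Theorem \ref{ineq3} provides.

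I would first handle the $R_{0,p}$ bound on $f_U$. By Theorem \ref{ineq1}, $\norm{R_{0,p} f_U}{\frac{1}{2}}{2}$ equals $\norm{\tau_{\sqrt{(1-p)/(1+p)}} f_U}{\frac{1-p}{2}}{2}$, moving the analysis into the $(1-p)/2$-biased space. Applying Theorem \ref{ineq2} then bounds this by $\norm{f_U}{\frac{1-p}{2}}{q}$ with $q = 1 + 1/(1 - \log(1-p))$. Since $(1-p)/2 \le 1/2$ and $q > 1$, I invoke the second inequality of Theorem \ref{ineq3} (the case $0 \le p' \le \tfrac{1}{2}$ applied to $f_U$) with $p' = (1-p)/2$ and $\delta = q$ to replace the biased norm by the uniform norm $\norm{f_U}{\frac{1}{2}}{q}$, which closes the chain.

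The $R_{p,0}$ case on $f_L$ is handled symmetrically. Theorem \ref{ineq1} gives $\norm{R_{p,0} f_L}{\frac{1}{2}}{2} = \norm{\tau_{\sqrt{(1-p)/(1+p)}} f_L}{\frac{1+p}{2}}{2}$; Theorem \ref{ineq2} bounds this by $\norm{f_L}{\frac{1+p}{2}}{q}$; and since $(1+p)/2 \ge 1/2$, I use the first inequality of Theorem \ref{ineq3} with $p' = (1+p)/2$ applied to $f_L$ to return to $\norm{f_L}{\frac{1}{2}}{q}$.

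The only place that requires care---and it is precisely why Theorem \ref{ineq3} was stated as a half-cube restriction rather than a global comparison---is matching the half-cube on which $f$ is supported to the direction of the bias produced by Theorem \ref{ineq1}. The one-bit example already exhibited in the text shows that $R_{p,0}$ is \emph{not} globally hypercontractive in the uniform norm, so this restriction cannot be dropped. However, because $R_{0,p}$ (respectively $R_{p,0}$) leaves the biased measure $\kappa_{(1-p)/2}$ (respectively $\kappa_{(1+p)/2}$) pointwise dominated by $\kappa_{1/2}$ on the upper (respectively lower) half of the cube, the support restriction to $f_U$ (respectively $f_L$) is exactly what makes Theorem \ref{ineq3} applicable, and the three-step chain goes through without further work.
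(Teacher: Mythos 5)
Your proof is correct and follows exactly the route the paper intends: the theorem is stated immediately after the line ``By Theorems \ref{ineq1}, \ref{ineq2} and \ref{ineq3} we finally obtain that,'' and your chain---Theorem \ref{ineq1} to move into the biased space, Theorem \ref{ineq2} for biased hypercontractivity, Theorem \ref{ineq3} to return to the uniform norm using the half-cube support---is precisely that argument, with the bias directions $(1+p)/2 \geq 1/2$ for $f_L$ and $(1-p)/2 \leq 1/2$ for $f_U$ matched correctly to the two cases of Theorem \ref{ineq3}.
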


We will find the following asymptotic form of our result useful, and indeed this is our main tool employed in Section \ref{sec:shattering-property}.
\begin{corollary}\label{cor:asymp}
\begin{equation}
  \norm{R_{0,p} f_U}{ \frac{1}{2} }{ 2 }   \leq    \norm{f_U}{ \frac{1}{2}}{ 2 - p \log e + O(p^2)} .     
\end{equation}

\begin{equation}
  \norm{R_{p , 0} f_L}{ \frac{1}{2} }{ 2 }  \leq    \norm{f_L}{ \frac{1}{2}}{ 2 - p \log e + O(p^2)}.     
\end{equation}
\end{corollary}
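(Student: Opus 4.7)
The plan is to derive this corollary as a direct Taylor expansion of the exponent $1+\frac{1}{1-\log(1-p)}$ appearing on the right-hand side of the preceding theorem. Both inequalities in that theorem are already in hand, so what remains is purely an asymptotic calculation on the Lebesgue exponent that parametrizes the target norm; no further hypercontractivity work is needed.

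First I would expand the logarithm about $p=0$, writing $\log(1-p) = -p\log e - \tfrac{1}{2} p^2 \log e + O(p^3)$, which is valid for $|p|<1$ regardless of the base of $\log$ (the factor $\log e$ absorbs the base change). This yields $1-\log(1-p) = 1 + p \log e + O(p^2)$. Next I would invert via the standard expansion $\frac{1}{1+x} = 1 - x + O(x^2)$ applied with $x = p\log e + O(p^2)$, obtaining
\[
\frac{1}{1-\log(1-p)} \;=\; 1 - p\log e + O(p^2),
\]
and therefore $1 + \frac{1}{1-\log(1-p)} = 2 - p\log e + O(p^2)$. Substituting this expression into each of the two inequalities given in the previous theorem then produces the two claims of the corollary.

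The only subtlety worth noting, and really the only potential obstacle, is that one must justify replacing the exact exponent by its asymptotic form inside the norm subscript without breaking the inequality. This is harmless because for a probability measure the $L^\alpha$ norms are monotone non-decreasing in $\alpha$ (by Jensen's inequality applied to the convex map $x\mapsto x^{\beta/\alpha}$ for $\beta\geq\alpha$), so choosing the implicit constant in the $O(p^2)$ term to be a valid upper bound on the true exponent for sufficiently small $p$ only weakens the inequality in a controlled way. Hence the corollary is a genuine (weaker) restatement of the previous theorem, repackaged in the asymptotic form most convenient for the shattering argument in Section~\ref{sec:shattering-property}.
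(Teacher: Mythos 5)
Your proposal is correct and follows essentially the same route as the paper, which simply invokes the Taylor expansions of $\log(1-x)$ and $1/(1-x)$; you have spelled out that one-line argument in full, and your remark about monotonicity of $L^\alpha$ norms in $\alpha$ on a probability space correctly resolves the only possible subtlety in passing to the asymptotic exponent.
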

\begin{proof}
By employing the Taylor expansions of $\log(1-x)$ and $1/(1-x)$.
\end{proof}

We can now generalize to the case of $R_{p_1, p_2}$. Let $p_1 \geq p_2$ , and define $f_L$ as before. We do not use the following theorem in the remainder of this paper, but we include it for the interested reader who seeks a complete statement of the 
hypercontractivity result. 

\begin{theorem}\label{bracketed}
For $p_1 \geq p_2$, and both $p_1, p_2 \leq \frac{1}{4}$, we have that:
\begin{equation}\label{mineq5}
  \norm{R_{p_1,p_2} f_L}{\frac{1}{2} }{ 2 }   \leq    \norm{f_L }{ \frac{1}{2}}{ 1+ (1-2p_2)^2 / \left(1- \log \left(\frac{1-p_1-p_2}{1-2 p_2} \right) \right) } .     
\end{equation}
\end{theorem}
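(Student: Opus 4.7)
}

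My plan is to mimic the route taken in Section~\ref{sec:hyperc-r_p-0} for $R_{p,0}$, but first to peel off the symmetric part of $R_{p_1,p_2}$ using the decomposition from Theorem~\ref{estructure}. Writing $q = (p_1-p_2)/(1-2p_2)$, we have $R_{p_1,p_2} = T_{p_2} R_{q,0}$. Note the exponent in the target bound can be rewritten as $1 + (1-2p_2)^2/(1-\log(1-q))$, since $1-q = (1-p_1-p_2)/(1-2p_2)$; matching this form is what guides the choice of parameters below.

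First I would pass to Fourier coefficients in the uniform basis. Since $T_{p_2}$ multiplies $\hat{g}^{(1/2)}(S)$ by $(1-2p_2)^{|S|}$, and by Theorem~\ref{thm:transf} the operator $R_{q,0}$ converts uniform-basis coefficients of its image into biased-basis coefficients of $f_L$ with scaling $(\sqrt{(1-q)/(1+q)})^{|S|}$, Parseval gives
\[
\norm{R_{p_1,p_2} f_L}{\frac{1}{2}}{2}^2 = \sum_{S} \bigl((1-2p_2)^2 \tfrac{1-q}{1+q}\bigr)^{|S|} \bigl(\hat{f_L}^{((1+q)/2)}(S)\bigr)^2 = \norm{\tau_\delta f_L}{\frac{1+q}{2}}{2}^2,
\]
where $\delta = (1-2p_2)\sqrt{(1-q)/(1+q)}$ and $\tau_\delta$ is now a symmetric noise operator in the $\frac{1+q}{2}$-biased measure. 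This is the direct analogue of Theorem~\ref{ineq1} for the composite operator.

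Next I would apply the biased Bonami-Beckner inequality (Theorem~\ref{biasedhyper}) in the $\frac{1+q}{2}$-biased space. Here $\bar{p} = \min(\frac{1+q}{2}, \frac{1-q}{2}) = \frac{1-q}{2}$ since $q \ge 0$, so $(1-\bar{p})/\bar{p} = (1+q)/(1-q)$. The exponent produced by the inequality is
\[
1 + \delta^2 \cdot \tfrac{1-\bar{p}}{\bar{p} \log(1/\bar{p})} = 1 + (1-2p_2)^2 \cdot \tfrac{1-q}{1+q} \cdot \tfrac{1+q}{1-q} \cdot \tfrac{1}{\log(2/(1-q))} = 1 + \tfrac{(1-2p_2)^2}{1 - \log(1-q)},
\]
reproducing the target exponent exactly. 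I would separately verify the admissibility condition $\delta^2 \sqrt{\bar p \log(1/\bar p)/(1-\bar p)} \le 1$; with $p_1, p_2 \le \tfrac14$ one has $q \le \tfrac12$ and $(1-2p_2)\le 1$, so the condition is easily satisfied.

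Finally I would convert back from the $\frac{1+q}{2}$-biased norm on the right-hand side to the uniform norm using Theorem~\ref{ineq3}. Since $\frac{1+q}{2} \ge \frac12$ and the exponent $r = 1 + (1-2p_2)^2/(1-\log(1-q))$ is strictly greater than $1$, and since $f_L$ is supported on the lower half of the cube, that theorem gives $\norm{f_L}{(1+q)/2}{r} \le \norm{f_L}{1/2}{r}$. Chaining the three inequalities yields the claim. The main obstacle is the algebra in step three: the two ``annoying'' factors $(1-2p_2)$ from $T_{p_2}$ and $\sqrt{(1-q)/(1+q)}$ from $R_{q,0}$ must combine so that the $(1\pm q)$ factors cancel exactly against the biased-hypercontractivity prefactor $(1-\bar p)/\bar p$, leaving only the clean quantity $(1-2p_2)^2/(1-\log(1-q))$; the choice of the intermediate biased space $\frac{1+q}{2}$ (inherited from Theorem~\ref{thm:transf}) is precisely what makes this cancellation happen.
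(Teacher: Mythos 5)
Your proposal follows essentially the same route as the paper's own proof: decompose via Theorem~\ref{estructure} as $R_{p_1,p_2}=T_{p_2}R_{q,0}$ with $q=(p_1-p_2)/(1-2p_2)$, convert to a single $\tau_\delta$ in the $\frac{1+q}{2}$-biased measure via Theorem~\ref{thm:transf} and multiplicativity of $\tau$, apply biased hypercontractivity (Theorem~\ref{biasedhyper}), and pass back to the uniform measure with Theorem~\ref{ineq3}. The only addition beyond the paper's argument is your explicit check of the admissibility condition in Theorem~\ref{biasedhyper}, which is a worthwhile bit of diligence but does not change the structure of the proof.
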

\begin{proof}
First note by Theorem \ref{estructure}, we have $
R_{p_1,p_2} = T_{p_2} R_{\frac{p_1-p_2}{1-2 p_2},0}.$
Now let $p = \frac{p_1 - p_2}{1- 2 p_2}$. Recalling Theorem \ref{thm:transf} and that $T_{p_2} = \tau_{1-2p_2}$, we obtain:

\begin{equation}\label{mineq1}
\norm{ T_{p_2} R_{p,0} f_L}{ \frac{1}{2}}{2 } = \norm{\tau_{(1-2 p_2)}\tau_{\sqrt{\frac{1-p}{1+p}}} f_L}{ \frac{1+p}{2}}{ 2 }.
\end{equation}
Now using the fact that $\tau_{a} \tau_{b} = \tau_{ab}$ , and by a similar calculation as in Theorem \ref{ineq2} for hypercontractivity in a biased measure space, we obtain :

\begin{equation}\label{mineq2} 
\norm{\ \tau_{(1-2p_2) \sqrt{\frac{1-p}{1+p}}} f_L}{ \frac{1+p}{2}}{ 2 } \leq \norm{f_L }{ \frac{1+p}{2}}{1 + \frac{(1-2p_2)^2}{1 - \log (1-p)}}. \end{equation}
We can combine equation \ref{mineq2} with Theorem \ref{ineq3} to get:
\begin{equation}\label{mineq4}
 \norm{T_{p_2} R_{p , 0} f_L}{ \frac{1}{2} }{ 2 }   \leq    \norm{f_L}{ \frac{1}{2}}{ 1+ \frac{(1-2p_2)^2}{1- \log (1-p)} } .     
\end{equation}
Now substituting back the value of $p = \frac{p_1-p_2}{1-2 p_2}$ into equation \ref{mineq4} we obtain the claimed result.
\end{proof}

\section{Hard Input Distributions for the Bregman Cube}
\label{sec:analysis-r_p_1-p_2}

We now describe the construction of a hard input distribution for the Bregman
cube. The key properties of this distribution will be that a query point will
(in expectation) either have a near neighbor within distance $O(\eps d)$, or will not have any
neighbor closer than $\Omega(\mu d)$. Note that in contrast, the corresponding gap
distribution for the Hamming cube via the \emph{symmetric} noise operator has a gap of $O(\eps d)$ for the
nearest neighbor versus $\Omega(d)$ for the second nearest neighbor. Finally for the purposes of our result and this paper we will assume $\mu \geq \frac{1}{\eps}$. 

\subsection{Generating our input and query on the cube.}\label{inputquery}
Define a random perturbation $\nu : \{0,1 \}^d \to \{0,1\}^d$ as
a random binary string $\nu_{p_1, p_2}(x)$ obtained by flipping any $0$ bit in $x$ to $1$ with probability $p_1$ and a $1$ bit to $0$
with probability $p_2$. 
In what follows, assume that $p_1 = \eps < \frac{1}{100}$ and $p_2  = \frac{\eps}{\mu}$. 

Uniformly at random pick $n$ elements $S= \{s_1,s_2, \ldots s_n \}$
from the set $L = \{x \in \{0,1 \}^d $ s.t $\sum(x) \leq \frac{d}{2} \}$, which
is the lower half of the Bregman cube.
We first perturb $S$ to obtain $P = \nu_{\eps/\mu, \eps}(S)$.  We then perturb $S$ in the opposite direction, to 
obtain $Q = \nu_{\eps , \eps/\mu} (S)$. We now assign $P$ to be our data set and choose our query point $q$ uniformly at random from $Q$.

\begin{theorem}\label{thm:generating-our-input}
Let $q$ be the perturbation of $s \in S$, and $p \in P$ be the corresponding
point of $P$. Then for $\mu = O \left(\frac{\eps d}{\log n} \right)$  and  with high probability (at least $1 - 1/\text{poly}(n)$):
\begin{enumerate}
\item{
 $\forall p' \neq p$, $p' \in P$, $D(q, p')= \Omega(\mu d)$.}
\item{$D(q, p) = \Theta(\eps d)$.}
\item{$\forall p' \neq p$, $p' \in P$ , $ \frac{D(q,p')}{D(q,p)}  = \Omega(\mu/\eps)$.}
\end{enumerate}
\end{theorem}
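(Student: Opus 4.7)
The plan is to verify each of the three assertions by a coordinate-wise expectation computation followed by a Bernstein-type concentration bound, with the hypothesis $\mu = O(\eps d / \log n)$ appearing precisely to make the tails beat the necessary union bound. For every index $i$, the contribution of coordinate $i$ to $D(q,p)$ or $D(q,p')$ takes values in $\{0, 1, \mu\}$, and these contributions are independent across $i$ once we condition on the underlying source point(s) in $S$.

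For Claim 2, both $q_i$ and $p_i$ are independent noisy versions of the same bit $s_i$. A case analysis on $s_i \in \{0,1\}$ shows that in either case the expected coordinate contribution equals $\mu \cdot (\eps/\mu)(1-\eps) + 1 \cdot \eps(1-\eps/\mu) = \Theta(\eps)$, so $E[D(q,p)] = \Theta(\eps d)$. The per-coordinate second moment is dominated by the rare $\mu$-cost event (probability $\Theta(\eps/\mu)$), giving variance $O(\mu \eps)$. Bernstein's inequality with bounded range $\mu$ and total variance $O(\mu \eps d)$ yields $\Pr[\,|D(q,p) - E\,D(q,p)| > c \eps d\,] = \exp(-\Omega(\eps d / \mu))$, which is $1/\textrm{poly}(n)$ exactly under the assumption $\mu = O(\eps d / \log n)$.

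For Claim 1, $q$ is a noisy version of $s$ while $p'$ is an independent noisy version of some $s' \neq s$, so $(q_i, p'_i)$ is a product of near-uniform Bernoulli bits. A direct computation then gives expected coordinate contribution $\Theta(\mu)$ with per-coordinate variance $\Theta(\mu^2)$, hence $E[D(q, p')] = \Theta(\mu d)$ with variance $\Theta(\mu^2 d)$; Bernstein again yields $\Pr[D(q, p') < c' \mu d] \leq \exp(-\Omega(d))$, and a union bound over the at most $n-1$ candidates $p'$ finishes the step since $d \geq \log n$ on the Bregman cube. Claim 3 is then immediate from Claims 1 and 2 by division.

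The main obstacle is the concentration step itself: because individual coordinates can contribute as much as $\mu$, a naive Hoeffding/sub-Gaussian bound is too weak, and the Bernstein requirement that variance dominate $\mu \cdot t$ is exactly what forces the restriction $\mu = O(\eps d / \log n)$ in the near-neighbor bound. A secondary technical issue is the mild dependence among coordinates of $s$ introduced by conditioning on $s \in L$; I would absorb this into constants using $\Pr[s \in L] \geq \tfrac{1}{2}$, which lets us upper bound any probability under uniform $s \in L$ by twice the same probability under uniform $s \in \{0,1\}^d$, where coordinates become genuinely independent and the calculations above apply verbatim.
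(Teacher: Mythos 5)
Your proposal is correct and follows essentially the same route as the paper: compute per-coordinate expectations by conditioning on the source bit, then concentrate. The only cosmetic difference is in the concentration step for Claim~2: the paper splits $D(q,p)$ into the $\mu$-cost indicator sum $Y$ and the unit-cost indicator sum $Z$ and applies multiplicative Chernoff to each (observing that $E[Y]=\Theta(\eps d/\mu)$ must be $\Omega(\log n)$, yielding the same constraint $\mu = O(\eps d/\log n)$), whereas you apply Bernstein to the combined sum with range $\mu$ and variance $O(\mu\eps d)$; these are equivalent and give the identical exponent $\exp(-\Omega(\eps d/\mu))$. You also handle the mild dependence from conditioning on $s\in L$ slightly more explicitly than the paper, which instead restricts to source points of Hamming weight near $d/2$ via a Chernoff argument on $|C|/|U|$, but again both devices are doing the same work.
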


\begin{proof}
 We focus on the distance induced by a single bit, and multiply by $d$ to get the overall distance(follows from each bit being chosen identically and independently). 

For the first claim, we show that $D(q_i, p'_i) = \Omega(\mu)$ with high probability. To aid our argument, we define the Hamming weight of a point as $H(x) = \sum_i(x_i)$. Now Chernoff concentration bounds give us that if $C = \{x \in U | 0.5 
\leq H(x) \leq 0.55 \}$, then $|C|/|U|  \geq 1 - e^{- \Omega(d)}$. Therefore each bit of randomly chosen $p' \in U$ is $0$ with probability at least $0.5 - 1 / \text{poly n}$. We obtain similarly that $q_i$ is $1$ with probability at least 
$0.5 - 1 / \text{poly}(n)$ for $\eps$ smaller than a suitable choice of constant.

 Since $D(0,1) = \mu$  and $q$ is independent of $p'$ for $p' \neq p$, we can argue now that $E[D(q,p')] = \Omega(\mu d)$. A standard Chernoff analysis shows that $D(q,p') = \Omega(\mu d)$ holds true for all $p' \neq p$ with high probability.

We consider now the second claim. Refer to the $j$-th bit of $s$ , $q$ and $p$ as $s^j$, $q^j$ and $p^j$ respectively and recall again that $D(0,1) = \mu$ and $D(1,0) = 1$. Consider first the case where $s^j = 0$. Then:
\begin{align*}
E[D(q^j, p^j) | s^j = 0] 
=& \Pr[q^j = 1 | s^j = 0 ] \Pr[p^j = 0 | s^j = 0] D(1,0)\   +  \\
& \Pr[q^j = 0 | s^j = 0] \Pr[p^j = 1 | s^j = 0] D(0,1)   \\
=& \eps \left(1 - \frac{\eps}{\mu} \right) D(1,0) +  (1 - \eps) \left( \frac{\eps}{\mu} \right) D(0,1) \\
=& \eps \left(1 - \frac{\eps}{\mu} \right) + (1 - \eps) \left( \frac{\eps}{\mu} \right)( \mu) \\ 
=&  2 \eps - \eps^2 - \frac{\eps^2}{\mu} = \Theta(\eps).
 \end{align*}

Similarly when $s^j = 1$, we have:
\begin{align*}
E[D(q^j, p^j) | s^j = 1] 
=& \Pr[q^j = 1 | s^j = 1 ] \Pr[p^j = 0 | s^j = 1] D(1,0)  + \\
& \Pr[q^j = 0 | s^j = 1] \Pr[p^j = 1 | s^j = 1] D(0,1)   \\
=& \  \left(1- \frac{\eps}{\mu} \right)  \left(\eps \right)  D(1,0) + \frac{\eps}{\mu} \left(1- \eps \right) D(0,1) \\
=& \  \left(1- \frac{\eps}{\mu} \right)  \left(\eps \right)   + \frac{\eps}{\mu} \left(1- \eps \right) \mu \\
=&\  2 \eps - \eps^2 - \frac{\eps^2}{\mu} = \Theta(\eps).
\end{align*}

We show now these distances concentrate around the expectation. Recall the classic Chernoff bounds : given a collection of independent $0$-$1$ random variables $X_i$ and $X = \sum_i X_i$ such that $u_x$ is the mean of $X$, then $Pr[|X - u_x|  \geq \sigma u_x ] \leq e^{\frac{-\sigma^2 u_x}{2}} + e^{\frac{-\sigma^2 u_x}{3}}$. Note that here that we can represent our distances 
in the form $Y = \sum_{i=1}^d Y_i$ and $Z = \sum_{i=1}^d Z_i$ where $i$ is an index over the number of bits $d$ and the probability of success is $\frac{\eps}{\mu} \left(1 - \eps  \right)$ and $\eps \left(1 - \frac{\eps}{\mu} \right)$ respectively. (Or asymptotically $\eps$ and $\frac{\eps}{\mu}$ respectively.)  For $Y$ and $Z$ to concentrate around $u_y$ and $u_z$ respectively for all $n$ points and suitable choice of constant $\sigma$, we clearly require $u_y = \Omega( \log n )$ and $u_z = \Omega(\log n)$. This requires $\frac{\eps}{\mu} d = \Omega(\log n)$. 

And finally, the third claim can be seen to follow directly from the first two.
\end{proof}
We note that it can be shown even for arbitrarily large $\mu$ and some constant $\eps$, that a ``gap'' of $\Omega(\frac{d}{ \log n})$ can be achieved by setting $\mu ' 
= d / \log n$ and applying perturbations $P = \nu_{\eps/\mu', \eps}(S)$ and $Q = \nu_{\eps , \eps/\mu'} (S)$ respectively.

\section{Shattering A Query}
\label{sec:shattering-property}
We are now ready to assemble the parts that make up the proof of
\Cref{thm:main}. In this section, we show that a point ``shatters'': namely,
that if we perturb a point by a little, then it is likely to go to many
different hash buckets. In the next section, we will show that this implies an
information-theoretic lower bound on the number of queries needed to recover the
original point that generates the query. 

We prove the shattering bound in two steps. In Lemma~\ref{lem:shattering-query}
we show that if we fix any sufficiently small subset of the cube, then the set
of points that are likely to fall into this subset under perturbation is
small. Then in Lemma~\ref{modularshatter}, we use this lemma to conclude that
for \emph{any} partition of the space into sufficiently small sets (think of
each set as the entries mapped to a specific hash table entry), any perturbed
query will be sent to many of these sets (or equivalently, no entry contains
more than a small fraction of the ``ball'' around the query). 

As mentioned in Section~\ref{sec:overview-our-methods}, the structure of this
section mirrors the argument presented by Panigrahy
\etal\cite{geometric}. The difficulty is that we can no longer directly work
with the (symmetric) operator $T_\eps$, and so the analysis becomes more
intricate. 
	
We consider sets of points restricted to $L = \{x : \sum_i x_i \leq \frac{d}{2} \}$.
Let $A$ be a light cell, where we stipulate a light cell to be such that $|A| \leq a 2^d$ for some small $0 < a <
1$. Define $\gamma_{y, \eps, \frac{\eps}{\mu}} (A) = \Pr[\nu_{\eps,
  \frac{\eps}{\mu}}(y) \in A]$.  Let $B \subseteq L$ be the set of points for
which a perturbation is likely to fall in $A$, i.e. $B = \{y \in L \mid  \gamma_{y, \eps, \frac{\eps}{\mu}} (A)  \geq a^{c_0 \eps} \}$ for some $0 < c_0 <1$ to be chosen later.  

We shall show that $|B| \leq 2^d a^{1+ c_1 \eps}$, where once again $0 < c_1 < 1$ can be set later.  To this purpose, we will use 
Taylor approximations to simplify the algebra to asymptotic behavior.

\begin{lemma}
\label{lem:shattering-query}
Let $A \subseteq \{0,1 \}^d$ with $|A| \leq a.2^d$. Let $\eps \in (0 , 1)$, $\mu \geq \frac{1}{\eps}$ and
$B = \{y \in L \mid  \gamma_{y, \eps, \frac{\eps}{\mu}} (A) \geq a^{c_0 \eps} \}$. Then for suitable choices of constants $c_0$ and $c_1$ less than $1$, and for sufficiently small $\eps$.
\begin{equation}
|B| < 2^d a^{1 + c_1 \eps}.
\end{equation} 
\end{lemma}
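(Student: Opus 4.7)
The proof will follow the standard second-moment template: turn the defining mass estimate for $B$ into an $L^2$ lower bound on an image under the directed noise operator, apply the directed hypercontractivity developed in Section~\ref{sec:asymm-bregm-cube}, and invert. Setting $f = \mathbf{1}_A$ (supported in $L$ by the standing convention), we have $(R_{\eps,\eps/\mu} f)(y) = \gamma_{y,\eps,\eps/\mu}(A)$, so the defining inequality of $B$ and a direct second-moment computation give
\[
\frac{|B|}{2^d}\, a^{2c_0\eps}\ \le\ \frac{1}{2^d}\sum_{y\in B}(R_{\eps,\eps/\mu} f)(y)^2 \ \le\ \norm{R_{\eps,\eps/\mu} f}{1/2}{2}^{2}.
\]

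The main obstacle is that $R_{\eps,\eps/\mu}$ is not itself hypercontractive (the one-bit example after Theorem~\ref{ineq1} rules out a clean bound on arbitrary $f$), so I cannot directly apply an analogue of Bonami--Beckner to $f$. Instead, I invoke the factorization of Theorem~\ref{estructure} to write $R_{\eps,\eps/\mu} = T_{\eps/\mu}\, R_{p,0}$ with $p = (\eps - \eps/\mu)/(1 - 2\eps/\mu) = \eps - O(\eps^2)$, using $\mu \ge 1/\eps$ so that $\eps/\mu \le \eps^2$. The operator $T_{\eps/\mu}$ has Fourier multipliers $(1-2\eps/\mu)^{|S|}\in [0,1]$, so it is an $L^2$ contraction; and because $f = f_L$, Corollary~\ref{cor:asymp} applies to $R_{p,0} f$, giving
\[
\norm{R_{\eps,\eps/\mu} f}{1/2}{2} \ \le\ \norm{R_{p,0} f}{1/2}{2}\ \le\ \norm{f}{1/2}{q}, \qquad q = 2 - \eps\log e + O(\eps^2).
\]

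Since $f$ is $\{0,1\}$-valued, $\norm{f}{1/2}{q}^{q} = |A|/2^d \le a$, hence $\norm{f}{1/2}{q}^{2} \le a^{2/q}$, and a Taylor expansion of $2/q$ produces $2/q = 1 + (\log e/2)\eps + O(\eps^2)$. Combining with the displayed inequality above,
\[
\frac{|B|}{2^d}\ \le\ a^{\,1 + ((\log e)/2 - 2c_0)\eps + O(\eps^2)}.
\]
Because $(\log e)/2 \approx 0.72 > 0$, I may fix constants $c_0, c_1 \in (0,1)$ with $c_0 + c_1/2 < (\log e)/4$ (e.g.\ $c_0 = 1/20$, $c_1 = 1/2$), and then take $\eps$ small enough that the $O(\eps^2)$ error is absorbed into the positive linear contribution; this yields the claimed bound $|B| < 2^d a^{1 + c_1\eps}$. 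The regime $\mu \ge 1/\eps$ is essential: it is what forces $p \approx \eps$ to leading order in Theorem~\ref{estructure} and therefore makes Corollary~\ref{cor:asymp} supply the clean asymptotic exponent $q = 2 - \eps\log e + O(\eps^2)$ on which the positive gap $(\log e)/2 - 2c_0$ ultimately depends.
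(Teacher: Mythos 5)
The second-moment reduction you set up is tidy and would indeed work, but it hinges on a hypothesis that the lemma does not grant you: that $A\subseteq L$. The parenthetical ``supported in $L$ by the standing convention'' is not correct. In the lemma statement $A$ is an arbitrary subset of $\{0,1\}^d$; only $B$ is constrained to lie in $L$. This is no accident of phrasing --- in the intended application (Lemma~\ref{modularshatter} and Section~\ref{sec:from-hyperc-lower}) the sets $A^t_j$ are the cells of a data-structure partition of the \emph{whole} cube, and since $\nu_{\eps,\eps/\mu}$ flips $0\to 1$ with a much higher rate than $1\to 0$ it pushes Hamming weight upward, so the mass of $A$ that actually matters typically lies near or beyond the boundary of $L$. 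Consequently the second inequality in your chain
\[
\|R_{\eps,\eps/\mu}\,1_A\|_{2,1/2}\ \le\ \|R_{p,0}\,1_A\|_{2,1/2}\ \le\ \|1_A\|_{2-\eps\log e+O(\eps^2),\,1/2}
\]
is not justified: Corollary~\ref{cor:asymp} is stated (and, as the one-bit example after Theorem~\ref{ineq1} shows, can only be stated) for functions supported on $L$. Concretely, the proof of the corollary passes through $\|1_A\|_{q,(1+p)/2}\le\|1_A\|_{q,1/2}$ via Theorem~\ref{ineq3}, and that comparison reverses when $A$ has mass in the upper half of the cube (where the $(1+p)/2$-biased measure is heavier than uniform). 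Splitting $A$ into $A\cap L$ and $A\cap U$ does not rescue the argument either, because the upper-half analogue of the corollary controls $R_{0,p}$, not the $R_{p,0}$ factor you actually have.

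This is precisely what the paper's longer argument is engineered to avoid. There one starts from $\Pr[Q_B\in A]$, uses the decomposition $R_{\eps,\eps/\mu}=T_{\eps/\mu}R_{p,0}$ and Theorem~\ref{thm:transf} to move to a symmetric $\tau$-operator, then \emph{splits} it via self-adjointness and Cauchy--Schwarz: the directed hypercontractivity together with the measure-transfer of Theorem~\ref{ineq3} are applied only to $1_B$ (for which $B\subseteq L$ holds by construction), while $1_A$ is handled by ordinary hypercontractivity for $\tau_{(\delta_1\delta_2)^{1/4}}$ in the \emph{uniform} measure, where no support restriction is needed. If you add the hypothesis $A\subseteq L$ your shorter second-moment argument is correct and your bookkeeping (the Taylor expansion $2/q = 1+\tfrac{\log e}{2}\eps+O(\eps^2)$ and the choice $2c_0+c_1<\tfrac{\log e}{2}$) checks out, but as stated the proof has a genuine gap.
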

\begin{proof}
Suppose on the contrary that $|B| > 2^d a^{1 + c_1 \eps}$. By definition, for every $y \in B$,  $Pr[\nu_{\eps, \frac{\eps}{\mu}} (y) \in A] \geq a^{c_0 \eps}$. Let $Q_B$ denote the random variable obtained by picking an 
element from $B$ uniformly at random and then applying $\nu_{\eps, \frac{\eps}{\mu}}$.

Now, 
\begin{align*}
Pr[Q_B \in A] &= \frac{2^d}{|B|} \langle R_{\eps, \frac{\eps}{\mu}} 1_B  , 1_A \rangle   \text{  (By definition of $R_{\eps, \frac{\eps}{\mu}}$.) }\\
&= \frac{2^d}{|B|}  \langle T_{\frac{\eps}{\mu}} R_{ \left(\eps - \frac{\eps}{\mu} \right) / \left(1 - 2 \frac{\eps}{\mu}  \right) , 0} 1_B , 1_A \rangle  \text{  ( By Theorem \ref{estructure} ). } \\
& = \frac{2^d}{|B|} \langle \tau_{1 - 2 \frac{\eps}{\mu}} R_{\left(\eps - \frac{\eps}{\mu} \right) / \left(1 - 2 \frac{\eps}{\mu}  \right) , 0} 1_B , 1_A \rangle \text{  (By the definition of $\tau$ ) }  
\end{align*}

For convenience, let $\delta_1 =  \sqrt{ \left(1 - \frac{\eps - \frac{\eps}{\mu}}{1 - 2 \frac{\eps}{\mu}}\right) \Big/ \left(1 + \frac{\eps - \frac{\eps}{\mu}}{1 - 2 \frac{\eps}{\mu}} \right)}$, $\delta_2 = 1 - 2 \frac{\eps}{\mu}$ and $p = \left(\eps - \frac{\eps}{\mu} \right) / \left(1 - 2 \frac{\eps}{\mu}  \right)$. 

We abuse notation slightly, and introduce the function:
\begin{equation}
1_B^{\left(\frac{1+p}{2}\right)}  = \sum_{S \subset \{0, 1 \}^d} \hat{1_B}^{\frac{1+p}{2}}(S) \chi_S^{\frac{1}{2}}.
\end{equation}

 I.e., this is a function whose Fourier coefficients in the uniform measure are the Fourier coefficients of $1_B$ in the $\frac{1+p}{2}$ measure. Now by Theorem 
\ref{thm:transf}, we can transform  $R_{\left(\eps - \frac{\eps}{\mu} \right) / \left(1 - 2 \frac{\eps}{\mu}  \right) , 0} 1_B$ as $\tau_{\delta_1} 
1_B^{(\frac{1+p}{2})}$:

\begin{align*}
Pr[Q_B \in A] &= \frac{2^d}{|B|} \langle  \tau_{\delta_2} \tau_{\delta_1} 1_B^{(\frac{1+p}{2})} , 1_A \rangle \\
&= \frac{2^d}{|B|} \langle \tau_{\delta_1 \delta_2} 1_B^{(\frac{1+p}{2})} , 1_A \rangle  \text{ (Since $\tau$ is multiplicative.) } \\
&= \frac{2^d}{|B|} \langle \tau_{(\delta_1 \delta_2)^{3/4}} 1_B^{(\frac{1+p}{2})} , \tau_{(\delta_1 \delta_2)^{1/4}} 1_A \rangle \text{ (Since $\tau$ can be distributed in a dot product.) } 
\end{align*}

We now proceed using Cauchy Schwarz to upper bound the dot product as a product of two norms, Parseval's to claim $\| \tau_{(\delta_1 \delta_2)^{3/4}} 1_B^{p'}\|_{2,\frac{1}{2}} = \| \tau_{(\delta_1 \delta_2)^{3/4}} 1_B \|_{2,p'}$ and then hypercontractivity to upper bound each of the norms in biased and uniform measure space respectively. Setting $p' = \frac{1+p}{2}$:
\begin{align*} 
 Pr[Q_B \in A] 
  & \leq \frac{2^d}{|B|}\| \tau_{(\delta_1 \delta_2)^{3/4}} 1_B^{p'}\|_{2,1/2} \|\tau_{(\delta_1 \delta_2)^{1/4}} 1_A \|_{2 , 1/2}   \\
&= \frac{2^d}{|B|}\| \tau_{(\delta_1 \delta_2)^{3/4}} 1_B \|_{2,p'} \|\tau_{(\delta_1 \delta_2)^{1/4}} 1_A \|_{2 , 1/2}
     \\
 & \leq \frac{2^d}{|B|}\|  1_B \|_{1+ \frac{p'}{(1-p') \log (1/(1-p')) }(\delta_1 \delta_2)^{1.5}, p'}  \| 1_A \|_{1+ 
\sqrt{\delta_1 \delta_2}, 1/2} .
\end{align*}

And finally, since $B \subset L$, by \ref{ineq3} we have $ \|  1_B \|_{1+ \frac{p'}{(1-p') \log (1/(1-p')) }(\delta_1 \delta_2)^{1.5}, p'}$ will only increase if we measure the norm in uniform space instead of the biased space with $p' > \frac{1}{2}$:  
\begin{equation}\label{lastinter}
Pr[Q_B \in A] \leq \frac{2^d}{|B|} \|  1_B \|_{1+ \frac{p'}{(1-p') \log (1/(1-p')) }(\delta_1 \delta_2)^{1.5}, 1/2}  \| 1_A \|_{1+ 
\sqrt{\delta_1 \delta_2}, 1/2}
\end{equation} 

On a high level now, our approach is simply to show that the power in the norm on both expressions is $2 - \Omega(\eps)$. This would show that the collision or intersection size $Pr[Q_B \in A] \leq \frac{2^d}{|B|}$ is much smaller than the product of the set sizes. To simplify these expressions, we focus now on $\frac{p'}{(1-p') \log (1/(1-p')) }(\delta_1 \delta_2)^{1.5}$ and $\sqrt{\delta_1 \delta_2}$. For $\sqrt{\delta_1 \delta_2}$, some straightforward substitution and the assumption that $\mu \geq \frac{1}{\eps}$ shows:
\begin{align*}
\sqrt{\delta_1 \delta_2} &= \sqrt{\left(1 - 2 \frac{\eps}{\mu} \right) \sqrt{\frac{1 - \eps - \frac{\eps}{\mu}}{1+ \eps - 3 \frac{\eps}{\mu}}}} \\
&= 1 - \frac{\eps}{2}+ O(\eps^2).
\end{align*}

We come now to $\frac{p'}{(1-p') \log (1/(1-p') )}(\delta_1 \delta_2)^{1.5}$. Simply plugging in the values for $p'$, $\delta_1$ and $\delta_2$ now shows:

\begin{align*}
\frac{p'}{(1-p') \log (1/(1-p')) } (\delta_1\delta_2)^{1.5} &=  \left(  \frac{1+\eps - 3 \frac{\eps}{\mu}} { 1 - \eps - \frac{\eps}{\mu}}    \right)^{1/4}   \frac{\left( 1 - 2 \frac{\eps}{\mu} \right)^{1.5}}{\left(1 - \log \left( 1 - \left (
\eps - \frac{\eps}{\mu}  \right) / \left( 1 - 2 \frac{\eps}{\mu} \right) \right) \right) }  \\
 &<  \left(  1 + 2 \eps + O(\eps^2)   \right)^{1/4}   \frac{1}{1 - \log \left(1 - \eps + O(\eps^2) \right)  }   \\
& \leq \left(1 + \frac{\eps}{2} + O(\eps^2) \right)  \left(1 - \eps \log e + O( \eps^2)  \right)\\
& < \left(1 - \frac{\eps}{2} + O(\eps^2) \right). 
\end{align*}
Or for sufficiently small $\eps$, that $\frac{p'}{(1-p') \log (1/(1-p')) } (\delta_1\delta_2)^{1.5} = 1 -  \Omega(\eps)$. 
We can now see the asymptotic behavior of the norms in Equation \ref{lastinter}: 
\begin{equation}\label{eq:splitA}
Pr[Q_B \in A] \leq \frac{2^d}{|B|} \|  1_B \|_{2 - \Omega(\eps), 1/2}  \| 1_A \|_{2 - \Omega(\eps), 1/2}
\end{equation}
Hence there exists a constant $k$,  such that:
\begin{align*}
Pr[Q_B \in A] & \leq \frac{2^d}{|B|} \| 1_B \|_{2 - k \eps, 1/2}  \| 1_A \|_{2 - k \eps, 1/2} .
\end{align*}

Now let $k \eps =2  \eps'$, and also set $c_0 \eps$ and $c_1 \eps$ to be $\frac{\eps'}{6}$. By the assumptions of our lemma, we have $\Pr[Q_B \in A] \geq a^{\frac{\eps'}{6}}$.  Recalling that $\| 1_B \|_{2 - 2 \eps', \frac{1}{2}} = |B|^{1 /(2-2 \eps')}$ and $\| 1_A \|_{2 - 2 \eps' , \frac{1}{2}} = |A|^{1/(2-2 \eps')}$ , we obtain :

\begin{align*}
Pr[Q_B \in A] &\leq \frac{2^d}{|B|} \|  1_B \|_{2 - 2 \eps', 1/2}  \| 1_A \|_{2 - 2 \eps', 1/2}  \\
&= \frac{2^d}{|B|} \left( \frac{|B|}{2^d} \right)^{1/(2- 2 \eps')}
 \left( \frac{|A|}{2^d} \right)^{1/(2- 2 \eps')}  \\
&= \left ( \frac{|A|}{2^d} \right)^{1/(2-2\eps')} \left( \frac{|B|}{2^d} \right)^{1/(2-2\eps') -1}. 
\end{align*}
 Now recalling that $|A| \leq a 2^d$ and claiming for contradiction that 
$|B| \geq 2^d a^{1+ \eps'/6} $, we obtain:

\begin{align*}
Pr[Q_B \in A] &\leq  a^{1/(2-2 \eps')} a^{(1 + \eps'/6)(\frac{1}{2 - 2 \eps'} - 1)} \leq a^{\eps'/6}.
\end{align*}
However this is impossible, since by our lemma assumptions $Pr[Q_B \in A] \geq  a^{\eps'/6}$.
 Hence we must have that if $\Pr[Q_B \in A] \geq a^{\eps'/6}$ then $|B| \leq 2^d a^{1 + \eps'/6}$. Noting that $\eps' = k \eps$ and $k$ is some constant,  our lemma follows.
\end{proof}
The following lemma is now an easy consequence.
\begin{lemma}\label{modularshatter}
Let $A_1, \ldots , A_m$ be partitions of $\{0,1 \}^d$ and let $LC = \{i : \|A_i \| \leq 2^d / \sqrt{m} \}$ be the set of light cells. Then:
\begin{equation}
\Pr_{y \in LC} [ \max_{i \in LC} \gamma_{y, \eps, \frac{\eps}{\mu}} (A_i) \geq m^{- \frac{c_0 \eps}{2}}] < m^{- \frac{c_1 \eps}{2}}.
\end{equation}
\end{lemma}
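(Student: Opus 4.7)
\textbf{Proof plan for Lemma \ref{modularshatter}.} The plan is to apply Lemma~\ref{lem:shattering-query} to each light cell separately and then combine the individual bounds by a union bound, with a small convexity-style trick to deal with the sum of $(1+c_1\eps)$-powers. Write $a_i = |A_i|/2^d$ so that $a_i \le 1/\sqrt{m}$ for each $i \in LC$ and $\sum_i a_i \le 1$ since the $A_i$'s partition $\{0,1\}^d$.

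First, I would fix $i \in LC$ and apply Lemma~\ref{lem:shattering-query} with the parameter $a$ in that lemma set to $a_i$. This produces a set $B_i = \{y \in L : \gamma_{y,\eps,\eps/\mu}(A_i) \ge a_i^{c_0\eps}\}$ with $|B_i| \le 2^d a_i^{1+c_1\eps}$. Since $a_i \le 1/\sqrt{m}$ and the map $x \mapsto x^{c_0\eps}$ is increasing, we have $a_i^{c_0\eps} \le m^{-c_0\eps/2}$. Consequently any $y$ with $\gamma_{y,\eps,\eps/\mu}(A_i) \ge m^{-c_0\eps/2}$ automatically lies in $B_i$, so the ``bad'' set of $y$'s for which some light cell captures too much $\nu$-mass is contained in $\bigcup_{i\in LC} B_i$.

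Next, I would bound the size of this union by a straightforward summation:
\begin{equation*}
\left|\bigcup_{i\in LC} B_i\right| \;\le\; \sum_{i\in LC} |B_i| \;\le\; 2^d \sum_{i\in LC} a_i^{\,1+c_1\eps} \;=\; 2^d \sum_{i\in LC} a_i \cdot a_i^{\,c_1\eps}.
\end{equation*}
Because each $a_i \le 1/\sqrt{m}$, we have $a_i^{c_1\eps} \le m^{-c_1\eps/2}$, and pulling this factor out of the sum gives $\sum_i a_i^{1+c_1\eps} \le m^{-c_1\eps/2}\sum_i a_i \le m^{-c_1\eps/2}$. Dividing the total by the measure of the underlying domain of $y$ (which is $L$, of size $\Theta(2^d)$, up to a constant factor that can be absorbed into $c_1$ by a mild rescaling) yields the stated probability bound $\Pr_y[\max_{i\in LC}\gamma_{y,\eps,\eps/\mu}(A_i) \ge m^{-c_0\eps/2}] < m^{-c_1\eps/2}$.

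The only nontrivial step is the elementary but key observation that $\sum_i a_i^{1+c_1\eps}$ is much smaller than one would naively expect from a term-by-term union bound of $m$ values each as large as $m^{-(1+c_1\eps)/2}$; the light-cell hypothesis $a_i\le 1/\sqrt{m}$ lets us pull one factor of $m^{-c_1\eps/2}$ out uniformly, and then the partition constraint $\sum_i a_i \le 1$ finishes the argument. Everything else is a mechanical invocation of Lemma~\ref{lem:shattering-query} and monotonicity of $x^{c_0\eps}$, so I do not anticipate any real obstacle beyond correctly aligning the threshold exponents $c_0\eps$ and $c_1\eps$ with those delivered by the single-cell statement.
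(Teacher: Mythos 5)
Your proposal is correct and follows essentially the same route as the paper: apply Lemma~\ref{lem:shattering-query} to each light cell with $a = a_i$, use monotonicity and $a_i \le 1/\sqrt{m}$ to note that the bad event for cell $i$ is contained in the $B_i$ the single-cell lemma controls, union-bound over $i$, and exploit the light-cell constraint together with $\sum_i a_i = 1$ to pull out a uniform factor $m^{-c_1\eps/2}$. Your extra care about normalizing by $|L|$ rather than $2^d$ is a small refinement the paper glosses over, but it changes nothing substantive since the constant is absorbed into $c_1$.
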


\begin{proof}
Let $a_i = \frac{|A|}{2^d}$ and note $\sum_i a_i = 1$.  By Lemma \ref{lem:shattering-query} $\Pr_{y \in LC} [  \gamma_{y, \eps, \frac{\eps}{\mu}} (A_i) \geq a_i^{ c_0 \eps}] \leq a_i^{1+c_1 \eps} $]. And we also have by the bound on light cells that $a_i^{ c_0 \eps} \leq m^{-\frac{c_0 \eps}{2}}$.  Then by a union bound, we have that the desired probability is:
\begin{align*}
&\Pr_{y \in LC} [ \max_{i \in LC} \gamma_{y, \eps, \frac{\eps}{\mu}} (A_i) \geq m^{- \frac{c_0 \eps}{2}}] 
 \leq  \sum_{i} \Pr_{y \in LC} [  \gamma_{y, \eps, \frac{\eps}{\mu}} (A_i) \geq a_i^{ c_0 \eps}] \\
 & \leq  a_i^{1+c_1 \eps} \leq \max_{i} a_i^{c_1 \eps} \sum_i a_i  \leq \max_i a_i^{c_1 \eps}  \leq (\sqrt{m})^{-c_1 \eps}.
\end{align*}
\end{proof}

\section{From hypercontractivity to a lower bound}
\label{sec:from-hyperc-lower}

First we lay out the notation and preliminaries of our argument, including our model. 
An $(m,r,w)$ non-adaptive algorithm is an algorithm in which given $n$ input points $p_1, \ldots, p_n$ in $\{0,1\}^d$ we prepare in preprocessing a table 
$T$ which consists of $m$ words, each $w$ bits long. Given a query point $q$, the algorithm queries the table at most $r$ times and let $l_1, l_2, \ldots ,l_r $ denote the set of indices looked up by the algorithm. For every $t \leq r$, the location of the $t$-th probe, $l_t = l_t(q)$ depends only upon the query point $q$ and no t upon the content that was read in the previous queries. In other words, the functions $l_1, l_2, \ldots , l_r$ depend on $q$ only. In this section, we show a time-space cell probe lower bound. We mostly use the machinery given in \cite{geometric}, but for the sake of explication and clarity we reproduce the argument and expand some of the steps.

The high level idea of \cite{geometric}, and also work by Larsen \cite{larsencellsampling} and  Wang and Yin \cite{wangyin} is ``cell sampling" of a data structure $T$ on input $P$. If $T$ resolves a large number of queries which do not err in few probes, then a small sample of the cells will resolve many queries with high probability. Now if such a sample of cells can be described in fewer bits than the information complexity of these queries, then there would be a contradiction. This lower bounds the size of $T$.

We prepare our dataset and query point as described in Section \ref{sec:analysis-r_p_1-p_2}. 
 We first pick a set of $n$ elements $S= \{s_1,s_2, \ldots s_n \}$
from the lower half of the Bregman cube. More precisely, we 
pick from the set of strings 
$U = \{x \in \{0,1 \}^d $ s.t $\sum(x) \leq \frac{d}{2} \}$. We first perturb $S$ to obtain $P = \nu_{\frac{\eps}{\mu}, \eps}(S)$.   
We then pick $i$ uniformly at random from $[1 \ldots n]$ and set $q$ to be $\nu_{\eps, \frac{\eps}{\mu}}(s_i)$. 
In what follows, we will let $s_i$ denote the point of $S$ which is perturbed to obtain $q$ and $p_i$ be the corresponding point in $P$. Theorem~\ref{thm:generating-our-input} guarantees that $D(q,p_i) = \Theta(\eps d)$ whereas 
$D(q,p_j) = \Omega(\mu d)$ for $j \neq i$ with high probability. 
Hence recovering a $\frac{\mu}{\eps}$ nearest neighbor to $q$ from $P$ is equivalent to recovering $p_i$ exactly.  The table is populated in preprocessing based on $P$ as the ground set.

Our assumption on the correctness of the algorithm is that when the input is sampled in this way, then for each $i$ with probability 
$\frac{1}{2}$ over the choice of $s_i$ and $p_i$, with probability $\frac{2}{3}$ over the choice of $q_i$ the algorithm can reconstruct $p_i$. We can fix the coin tosses of the algorithm and assume the algorithm is deterministic, and we assume the query algorithm is given access to not only $P$ but also $S$.

The main result of this section is as follows.

\begin{theorem}
\label{sec:from-hyperc-lower-1}
A $(m,r,w)$ non adaptive algorithm to recover a $O(\frac{\mu}{\eps})$ nearest neighbor to a query point $q$ with constant probability has $mw \geq \Omega(\eps d n^{1+ \Omega(\frac{\eps}{r})})$ as long as $w$ is polynomial in $d, \log n$.
\end{theorem}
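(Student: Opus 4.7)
The plan combines Lemma~\ref{modularshatter} with the cell-sampling framework of~\cite{geometric}; the technical novelty has already been spent in establishing the directed shattering, and the remainder follows the Panigrahy-Talwar-Wieder template adapted to the directed setting. Fix the deterministic $(m,r,w)$ algorithm (coin tosses pre-fixed per the remark preceding the statement). Each probe function $l_t:\{0,1\}^d\to[m]$ induces a partition $\pi_t$ of the query space into $m$ cells; call a cell \emph{light} if it has measure at most $2^d/\sqrt m$, so that each $\pi_t$ has at most $\sqrt m$ heavy cells. For $s_i$ drawn uniformly from $L$ and $q_i=\nu_{\eps,\eps/\mu}(s_i)$, apply Lemma~\ref{modularshatter} to each $\pi_t$ and take a union bound over $t\in[r]$: with constant probability, every probe-cell of $q_i$ is light and captures at most an $m^{-c_0\eps/2}$ fraction of the perturbation mass $\gamma_{s_i,\eps,\eps/\mu}$. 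Call such sources \emph{typical}; by Theorem~\ref{thm:generating-our-input}, for at least a constant fraction of typical $s_i$ the algorithm must identify the unique $O(\mu/\eps)$-nearest neighbour $p_i$ with probability $2/3$ over $q_i$.

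Now cell-sample: independently include each of the $m$ cells in a random set $I\subseteq[m]$ with probability $\alpha$, to be fixed later. Say query $i$ is \emph{resolved by} $I$ if all $r$ probe-cells of $q_i$ lie in $I$; in that event the contents of $T|_I$ together with $q_i$ deterministically produce $p_i$. For any typical $s_i$ the resolution probability over the sampling is $\alpha^r$, so in expectation $\Omega(\alpha^r n)$ typical queries are resolved, and Markov/averaging fixes a concrete $I$ of size $O(\alpha m)$ resolving $k=\Omega(\alpha^r n)$ typical queries. The list $\{p_i\}_{i\text{ resolved}}$ is then determined from the $O(\alpha mw+m)$ bits that describe the contents of $T|_I$ and the identity of $I$.

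We now apply the standard encoding inequality of~\cite{geometric}. Each of the $k$ resolved $p_i$'s carries $\Omega(\eps d)$ bits of residual entropy conditional on $\{s_j,q_j\}_j$ and the remaining $p_j$'s (by the per-coordinate perturbation calculation inside the proof of Theorem~\ref{thm:generating-our-input}), while the shattering guarantees that no single sampled cell can resolve more than an $m^{-c_0\eps/2}$ fraction of typical queries -- blocking the obvious compressed description. Balancing the resulting inequality and absorbing the $O(m)$ description overhead using the hypothesis $w=\operatorname{poly}(d,\log n)$, the optimal choice $\alpha=n^{-\Theta(\eps/r)}$ yields $mw=\Omega(\eps d\cdot n^{1+\Omega(\eps/r)})$. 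The main obstacle is this final bookkeeping: the combinatorial consequence of Lemma~\ref{modularshatter} must be traded against the resolution probability $\alpha^r$ in precisely the right way to extract the $n^{1+\Omega(\eps/r)}$ exponent, and this accounting (explicit in~\cite{geometric} for the symmetric case) is exactly where our directed shattering bound is consumed in its sharpest form.
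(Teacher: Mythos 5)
Your high-level framework (gap instance from Theorem~\ref{thm:generating-our-input}, cell sampling on a non-adaptive table, per-point residual entropy $\Omega(\eps d)$, balance against description length) is the same as the paper's, which casts the accounting in mutual-information language with Fano rather than an explicit encoder/decoder --- an equivalent formalism. The gap is in the crucial middle step, where you set the per-source resolution probability to $\alpha^r$ and then claim the inequality balances to $mw \geq \Omega(\eps d\, n^{1+\Omega(\eps/r)})$; it does not. Your inequality is $\alpha m w + O(m) \geq \Omega(\alpha^r n\eps d)$, hence $mw \geq \Omega(\alpha^{r-1}n\eps d) - O(m/\alpha)$. Since $\alpha \le 1$ and $r\ge 2$, the right-hand side is maximized at $\alpha=1$ and never exceeds $O(n\eps d)$, so you recover only a near-linear space bound, and choosing $\alpha=n^{-\Theta(\eps/r)}$ only makes it \emph{worse}, not better. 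The exponent $n^{1+\Omega(\eps/r)}$ cannot be extracted from a per-source success probability of $\alpha^r$.

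The missing idea is that the per-source resolution probability is $\Omega(1)$, not $\alpha^r$, and this is precisely where Lemma~\ref{modularshatter} must be consumed. Each typical source $s_i$ has a large set $Q_i$ of good queries ($|Q_i|\geq n^{\eps}$ via Lemma~\ref{lem:goodProp}), and shattering guarantees these good queries scatter across $\Omega(n^{c_2\eps}/r)$ \emph{distinct} cells rather than piling into a few. Each individual good $q\in Q_i$ is fully captured by the sample with probability roughly $\alpha^r = n^{-\Theta(\eps)}$, but because there are $n^{\Theta(\eps)}/r$ of them mapping to disjoint cells, at least one is captured with constant probability. That boosted $\Omega(1)$ per-source success is what you feed into the entropy inequality to get $\alpha m w \geq \Omega(n\eps d)$, and then $\alpha = n^{-\Theta(\eps/r)}$ gives $mw \geq \Omega(\eps d\, n^{1+\Theta(\eps/r)})$. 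Your sentence ``the shattering guarantees that no single sampled cell can resolve more than an $m^{-c_0\eps/2}$ fraction of typical queries --- blocking the obvious compressed description'' misattributes the role of the shattering: it is not there to prevent a smarter encoder, it is there to establish that the random sample $I$ hits some fully-probed good query for a constant fraction of sources. Without that step the bound collapses to the trivial $\Omega(n\eps d)$.
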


Note that \Cref{sec:from-hyperc-lower-1} yields \Cref{thm:main} as a corollary, by setting $c' = \mu/\epsilon$.

\begin{proof}
We recall some standard information theoretic notation. Let $H(A)$ be the entropy of a distribution $A$, and let $I(A,B)$ be the mutual information of two distributions, such that $I(A,B) = H(A) - H(A|B) = H(B) -H(B|A)$. We have the following well-known rules for simplifying expressions:

\begin{enumerate}
\item{$I(X,Y) = I(Y,X)$.}
\item{$I(X;Y|Z) = H(X|Z) - H(X|Y,Z)$. }
\item{$I(X;Y) = 0$ if $X$ and $Y$ are independent random variables.}
\item{$I(Z;X,Y)  = I (Z;X ) + I(Z|X;Y)$. If $X$ and $Y$ are independent.}
\item{For $n$ independent random variables $X_1$ through $X_n$, we have $I(Y; X_{1:n}) = \sum_i I( Y| X_{1:i-1}; X_i)$.}

\end{enumerate}
Now let $L$ be a set of $k$ locations picked at random from the table, where $k$ is a parameter to be fixed later, and $T[L] = \{T[i] : i \in L  \}$ be the corresponding set of words.  

\begin{claim}
$I(T[L]; p_i | S, L, q) = I(T[L] ; p_i|S, L)$.
\end{claim}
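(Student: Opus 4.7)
The plan is to reduce the claim to a conditional-independence statement between $q$ and the pair $(T[L], p_i)$ given $(S, L)$, and then apply a one-line information-theoretic identity. The crucial observation is that in the input generation of Section~\ref{sec:analysis-r_p_1-p_2}, conditioned on the seed set $S$, the perturbations used to produce $P$ and the perturbation used to produce $q$ from $s_i$ draw on disjoint, independent coins. In particular, given $S$, the query $q$ depends only on $s_i$ and on independent noise bits, while $P$ (and hence $p_i$, which is a coordinate of $P$) depends only on $S$ and on a separate independent batch of noise bits. Since $L$ is drawn uniformly at random independent of everything else, conditioning on $L$ as well does not disturb this independence. Consequently $q \perp P \mid S, L$, and because $T[L]$ is a deterministic function of $(P, L)$, we obtain the key conditional independence
\[
q \;\perp\; (T[L], p_i) \;\bigm|\; S, L.
\]

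With this in hand, I would invoke the following standard fact: if $X \perp (Y, Z) \mid W$, then $I(Y; Z \mid W, X) = I(Y; Z \mid W)$. The one-line proof just expands $I(Y; Z \mid W, X) = H(Y \mid W, X) - H(Y \mid W, X, Z)$ and uses that $X \perp Y \mid W$ gives $H(Y \mid W, X) = H(Y \mid W)$, while $X \perp Y \mid W, Z$ (which also follows from $X \perp (Y,Z) \mid W$) gives $H(Y \mid W, X, Z) = H(Y \mid W, Z)$. Setting $X = q$, $Y = T[L]$, $Z = p_i$, $W = (S, L)$ yields exactly the claim $I(T[L]; p_i \mid S, L, q) = I(T[L]; p_i \mid S, L)$.

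There is no real obstacle here; the whole argument is bookkeeping to make sure that the randomness used to build $P$ and the randomness used to produce $q$ are separated by the conditioning on $S$, and that $L$, being drawn independently, can be freely moved into the conditioning. I would therefore keep the write-up short: one sentence to recall the coupling of $q$ and $P$ through $S$, one sentence to state the conditional independence, and one displayed application of the identity above.
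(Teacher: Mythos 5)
Your argument is correct and is essentially the same as the paper's: the paper's one-line proof ("When $S$ and $L$ are fixed, $p_i$ is independent of $q$, and $T[L]$ is determined by $P$. So we may simply drop $q$") rests on exactly the conditional independence $q \perp P \mid S, L$ that you identify. You are somewhat more careful in making explicit that what is needed is the \emph{joint} independence $q \perp (T[L], p_i) \mid S, L$ (which follows since both are functions of $(P, L)$) and in spelling out the information-theoretic identity that licenses dropping the conditioning on $q$; the paper's phrasing, taken literally, only asserts two marginal independences, which alone would not suffice.
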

\begin{proof}
When $S$ and $L$ are fixed, $p_i$ is independent of $q$, and $T[L]$ is determined by $P$. So we may simply drop $q$ here.
\end{proof}
For the remainder of our proof, for ease of notation we will implicitly assume
$S$ and $L$ are known to the algorithm and $p_i$ is conditioned on them. 
\begin{claim}
$\sum_{i=1}^n I(T[L];p_i) \leq H(T[L]) \leq wk.$
\end{claim}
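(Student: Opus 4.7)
The plan is to prove the two inequalities in the claim separately, using the independence of the data points and standard information-theoretic identities.

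For the upper bound $H(T[L]) \leq wk$, this is immediate: $T[L]$ consists of $k$ words of $w$ bits each, so its support has size at most $2^{wk}$, and hence its entropy is at most $wk$. The only subtlety is that we are implicitly conditioning on $S$ and $L$ (per the convention set in the previous claim), but conditioning cannot increase entropy, so $H(T[L] \mid S, L) \leq H(T[L]) \leq wk$ still holds.

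For the main inequality $\sum_i I(T[L]; p_i) \leq H(T[L])$, the key observation is that the points $p_1, \ldots, p_n$ are mutually independent, since each $p_i = \nu_{\eps/\mu, \eps}(s_i)$ is an independent perturbation of an independently and uniformly chosen $s_i \in L$. I would then invoke the chain rule for mutual information:
\begin{equation*}
I(T[L]; p_1, p_2, \ldots, p_n) = \sum_{i=1}^n I(T[L]; p_i \mid p_{1:i-1}).
\end{equation*}
Expanding each conditional mutual information as $I(T[L]; p_i \mid p_{1:i-1}) = H(p_i \mid p_{1:i-1}) - H(p_i \mid T[L], p_{1:i-1})$, and using independence of the $p_i$ to replace $H(p_i \mid p_{1:i-1})$ by $H(p_i)$, and the fact that conditioning only reduces entropy to replace $H(p_i \mid T[L], p_{1:i-1})$ with the larger (or equal) quantity $H(p_i \mid T[L])$ inside a lower bound, I obtain
\begin{equation*}
I(T[L]; p_i \mid p_{1:i-1}) \geq H(p_i) - H(p_i \mid T[L]) = I(T[L]; p_i).
\end{equation*}

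Summing over $i$ and combining with the chain rule gives $\sum_i I(T[L]; p_i) \leq I(T[L]; p_{1:n}) \leq H(T[L])$, where the last inequality is a standard fact ($I(X;Y) \leq H(X)$). Chaining with the first bullet yields $\sum_i I(T[L]; p_i) \leq H(T[L]) \leq wk$. There is no real obstacle here; the only point requiring care is correctly tracking the conditioning on $(S, L)$ throughout and making sure independence of the $p_i$ still holds under that conditioning, which it does because each $p_i$ is obtained from its own $s_i$ via a fresh independent perturbation.
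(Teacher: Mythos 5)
Your proof is correct and follows essentially the same route as the paper: both argue $H(T[L]) \leq wk$ by a direct bit count, and both establish $I(T[L];p_i) \leq I(T[L]; p_i \mid p_{1:i-1})$ by expanding both sides, using the (conditional) independence of the $p_i$ given $(S,L)$, and noting that extra conditioning can only decrease entropy, then finish with the chain rule and $I(X;Y) \leq H(X)$. Your explicit remarks about tracking the implicit conditioning on $S$ and $L$ are a nice clarification of a point the paper leaves to the reader.
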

\begin{proof}
First note that $H(T[L]) \leq wk$ simply from the fact that there are only $wk$ bits in $T[L]$, and $H(T[L]) \leq wk$.  For $\sum_{i=1}^n I(T[L];p_i)$, we note that $I(T[L];p_i) \leq I(T[L] | p_1,p_2, \ldots p_{i-1} ; p_i)$.
To note this, we see the direct comparison:
\begin{align*}
&I(T[L];p_i) - I(T[L] | p_1,p_2, \ldots p_{i-1} ; p_i) \\
=& H(p_i) - H(p_i|T[L]) - H(p_i ) + H(p_i| T[L],p_1,p_2, \ldots p_{i-1}) \\
=&  H(p_i| T[L],p_1,p_2, \ldots p_{i-1}) - H(p_i|T[L]) \\
\leq& 0.
\end{align*}
Hence $\sum_{i=1}^n I(T[L];p_i) \leq \sum_{i=1}^n I(T[L] | p_1,p_2, \ldots p_{i-1} ; p_i)$. Applying the chain rule, this 
latter quantity equals  $I(T[L]; p_1, p_2, \ldots p_n ) \leq H(T[L])$.
\end{proof}
Taking expectations on both sides w.r.t. $L$ and $S$, we have:

\begin{equation}\sum_{i=1}^n E_{L,S} [ I (H(T[L] ; p_i |S,L )] \leq wk. \end{equation}
Set $k = m/ n^{\Omega(\frac{\eps}{r})}$. Our goal therefore is to show that $E_{L,S}[I(T[L];p_i |S,L)] \in \Omega(\eps d)$, as this
would immediately imply the theorem.  

We will prove the slightly stronger result that $I(T[L];p_i|S, L) = \Omega(\eps d)$. Suppose that our algorithm can reconstruct $p_i$ given $T[L]$ with constant probability $\alpha$. We can lower bound $H(p_i|S, L)$ as follows. Note that it suffices to examine $H(p_i|s_i)$. In each $1$-bit of $s_i$ there is an induced entropy 
of $-\eps \log \eps - (1-\eps) \log (1- \eps) \geq -\eps \log(1-\eps) -(1-\eps) \log (1- \eps) = - \log (1- \eps) = \Omega(\eps)$. Similarly, in each $0$-bit of $s_i$, the entropy is $\frac{\eps}{\mu} \log \frac{\eps}{\mu} + (1 - \frac{\eps}{\mu}) \log ( 1 - \frac{\eps}{\mu}) = \Omega(\frac{\eps}{\mu})$. Note that $H(p_i | S)$ is at least  $\Omega(\eps d)$,  just from the entropy of the
$0$-bits of $s_i$.  

We now use the following simplification of Fano's inequality, which in slightly different form was described by Regev \cite{regev2013entropy}:

\begin{lemma}
  Let $X$ be a random variable, and let $Y = g(X)$ where $g(\cdot)$ is a random process. Assume the existence of a procedure $f$ that given $y = g(x)$ can reconstruct $x$ with probability $p$. Then 
\[ I(X;Y)  \ge p H(X) - H(p) \].
\end{lemma}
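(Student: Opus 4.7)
The plan is to prove this via the standard Fano-style trick of introducing a success indicator. Define the Bernoulli random variable $Z = \mathbf{1}[f(Y) = X]$, so that $\Pr[Z = 1] = p$ and $H(Z) = H(p)$. The point of introducing $Z$ is to split the conditional entropy $H(X \mid Y)$ according to whether the reconstruction procedure succeeded, in which case $X$ is fully determined by $Y$ and contributes no residual entropy.

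Concretely, I would first apply the chain rule together with the fact that conditioning cannot increase entropy:
\begin{equation*}
H(X \mid Y) \;\le\; H(X, Z \mid Y) \;=\; H(Z \mid Y) + H(X \mid Y, Z) \;\le\; H(p) + H(X \mid Y, Z).
\end{equation*}
Next, I would decompose $H(X \mid Y, Z)$ by conditioning on the value of $Z$. When $Z = 1$, the reconstruction $f(Y)$ equals $X$, so $X$ is a deterministic function of $Y$ and thus $H(X \mid Y, Z = 1) = 0$. When $Z = 0$, I use only the trivial bound $H(X \mid Y, Z = 0) \le H(X)$. Weighting by $\Pr[Z=1] = p$ and $\Pr[Z=0] = 1 - p$ gives
\begin{equation*}
H(X \mid Y, Z) \;\le\; (1-p)\, H(X),
\end{equation*}
and hence $H(X \mid Y) \le H(p) + (1-p)\,H(X)$.

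Finally, I would rearrange using $I(X;Y) = H(X) - H(X \mid Y)$ to obtain
\begin{equation*}
I(X;Y) \;\ge\; H(X) - H(p) - (1-p)\,H(X) \;=\; p\, H(X) - H(p),
\end{equation*}
which is the claim. There is no genuine obstacle here; the only thing to take care of is that $f$ may be randomized and the success probability $p$ is taken over the joint distribution of $X$, $g$, and any internal coins of $f$, so $Z$ should be defined on this full probability space before taking entropies. Once that is fixed, every step above is a one-line application of either the chain rule, the data-processing-style inequality $H(A \mid B) \le H(A)$, or the observation that $X$ is deterministic given $(Y, Z=1)$.
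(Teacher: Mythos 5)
Your proof is correct: it is the standard derivation of Fano's inequality via the success indicator $Z$, which is exactly what the paper invokes (the paper's proof is a one-liner citing Fano's inequality without spelling out the argument). The one subtlety you flag about a randomized $f$ is handled by conditioning on its internal coins $R$ as well, i.e.\ running the argument with $Y' = (Y,R)$ and using $I(X;Y,R) = I(X;Y)$ (since $R$ is independent of $(X,Y)$) so that $X$ is genuinely determined given $(Y', Z=1)$; merely defining $Z$ on the enlarged space is not quite enough to make $H(X\mid Y, Z=1)=0$ without also putting $R$ into the conditioning.
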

\begin{proof}
  The proof follows from the usual statement of Fano's inequality in terms of the conditional entropy $H(X|Y)$. 
\end{proof}

Consider now the mutual information $I(T[L];p_i|S, L)$. By Fano's inequality and the lower bound on $H(p_i|S, L)$, the desired lower bound on $I(T[L];p_i|S, L)$ will follow if we can present a procedure that with constant probability will reconstruct $p_i$ from $T[L]$ given $S$ and $L$. Note that $i$ is fixed in this process. 

Denote by $l_j(q)$ the location of the $j$-th query when the query point is $q$. We write $l_{[r]}$ to denote $l_1(q) \cup \ldots \cup l_r(q)$. We say a point $q_j$ is \emph{good} for $p_j$ if  $D(q_j,s_j)$ is at most $\eps d$ and $p_j$ can be reconstructed 
from $q_j$, $s_j$ and $T[L_{[r]} (q_j)]$ (the set of table lookups on $q$) with constant probability. Let $Q_i$ denote the set of points which are good for $p_i$. We make the following useful observations about $Q_i$.
\begin{lemma}\label{lem:goodProp}
\begin{enumerate}
\item{With probability at least $\frac{1}{2}$ over the choice of $s_i$, we have $\Pr [\nu_{\eps, \frac{\eps}{\mu}} (s_i)  \in Q_i] \geq \frac{2}{3}$ (where the latter probability is taken over the perturbation).}
\item{$|Q_i| \geq n^{\eps}$ with probability at least $\frac{1}{2}$.}
\end{enumerate}
\end{lemma}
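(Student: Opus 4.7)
The plan is to derive both items directly from the algorithm's correctness hypothesis together with the concentration properties of the query distribution $\nu_{\eps, \eps/\mu}$ already established in Section~\ref{sec:analysis-r_p_1-p_2}.

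For item (1), the standing correctness assumption provides that, with probability at least $\tfrac{1}{2}$ over the draw of $s_i$, the algorithm recovers $p_i$ from $T[L_{[r]}(q)]$ with probability at least $\tfrac{2}{3}$ over $q = \nu_{\eps, \eps/\mu}(s_i)$. The only additional requirement in the definition of ``good'' is the distance bound $D(q, s_i) = O(\eps d)$. A per-coordinate computation analogous to the one in Theorem~\ref{thm:generating-our-input} (each $0$-bit of $s_i$ contributes $1$ with probability $\eps$, each $1$-bit contributes $\mu$ with probability $\eps/\mu$) gives $E[D(q, s_i)] = \eps d$, and the same Bernstein-type tail estimate used there---valid under the assumed upper bound $\mu = O(\eps d / \log n)$---forces the distance event to fail with probability $1/\mathrm{poly}(n)$. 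Intersecting this near-certain event with the correctness event and absorbing the $o(1)$ slack into the constant $\tfrac{2}{3}$ yields $\Pr[\nu(s_i) \in Q_i] \geq \tfrac{2}{3}$ on the same set of good $s_i$.

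For item (2), once item (1) is in hand I would apply a max-probability (min-entropy) argument. Writing $H(s_i) = \sum_j s_{i,j}$ for the Hamming weight,
\[
\tfrac{2}{3} \;\leq\; \Pr[\nu(s_i) \in Q_i] \;\leq\; |Q_i|\cdot \max_{q\in\{0,1\}^d}\Pr[\nu(s_i) = q],
\]
and coordinate-wise independence together with the fact that no-flip is the modal outcome at every bit gives
\[
\max_q \Pr[\nu(s_i) = q] \;=\; (1-\eps)^{d - H(s_i)}(1 - \eps/\mu)^{H(s_i)} \;\leq\; (1-\eps)^{d/2},
\]
where the last step uses $s_i \in L$ (equivalently $H(s_i) \leq d/2$) and $1 - \eps/\mu \leq 1$. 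Rearranging and applying $1 - \eps \leq e^{-\eps}$ yields $|Q_i| \geq \tfrac{2}{3}\, e^{\eps d/2}$, which is at least $n^\eps$ whenever $d = \Omega(\log n)$, as assumed throughout the paper.

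The main obstacle is purely bookkeeping: matching the various constants ($\tfrac{1}{2}$, $\tfrac{2}{3}$, and the absolute constant hidden in ``$D(q_j, s_j) \leq \eps d$'') so that they remain mutually consistent after the $1/\mathrm{poly}(n)$ slack from the distance concentration is absorbed. In particular one should set the algorithm's base success probability just above $\tfrac{2}{3}$, and interpret the $\eps d$ in the definition of ``good'' up to an absolute constant factor. Once these conventions are fixed, both items follow from (a) the per-index correctness assumption and (b) the elementary bound that any set of $\nu(s_i)$-measure at least $\tfrac{2}{3}$ must contain at least $\tfrac{2}{3}$ times the reciprocal of the maximum atom.
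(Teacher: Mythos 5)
Your item (1) tracks the paper's proof, though you flesh out the one gap the paper leaves implicit — the paper simply declares that the correctness hypothesis plus the definition of $Q_i$ implies the claim, while you supply the missing concentration estimate showing the distance side-condition $D(q,s_i) = O(\eps d)$ holds with probability $1 - 1/\mathrm{poly}(n)$, and correctly flag the constant-factor convention needed for the absorption to go through.

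Your item (2), however, is a genuinely different and in my view cleaner argument. The paper lower-bounds $|Q_i|$ by counting the total number of Hamming-cube points within distance $\eps d$ of $s_i$ (it states there are $\Omega(d)^{\Omega(\eps d)}$ of them) and then asserts that since the algorithm succeeds with constant probability, ``the majority of these points must be good.'' That inference is not tight: $\nu_{\eps,\eps/\mu}(s_i)$ is far from uniform on the ball, so a constant $\nu$-measure of good points does not directly force the good set to be a majority of the ball by cardinality. Your max-atom bound
\[
\tfrac{2}{3}\;\le\;\Pr[\nu(s_i)\in Q_i]\;\le\;|Q_i|\cdot\max_q \Pr[\nu(s_i)=q]\;\le\;|Q_i|\,(1-\eps)^{d/2}
\]
bypasses that gap entirely: it converts $\nu$-measure to cardinality via the maximum atom of the perturbation distribution, using $s_i\in L$ to control the exponent, and then $d = \Omega(\log n / \eps^2)$ (which the paper's standing hypotheses $\mu \ge 1/\eps$ and $\eps d/\mu = \Omega(\log n)$ give you) finishes the job. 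This route is both more elementary and more rigorous than the paper's ball-counting heuristic, and it yields the same $n^{\eps}$ threshold.
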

\begin{proof}
The correctness of the algorithm and definition of $Q_i$ implies the first claim; i.e for a large fraction of the points in $S$, with constant probability the perturbed point generated $q_i = \nu_{\eps, \frac{\eps}{\mu}} (s_i)$ will be a \emph{good} point for reconstructing $p_i$. 

For the second claim observe that there are $\Omega(d)^{\Omega(\eps d)}$ points $q$ within at most $\eps d$ distance of $s_i$. Since our algorithm reconstructs $p_i$ with constant probability, the majority of these points must be good for $p_i$. Since 
$n \leq 2^d$, hence $n \leq d^d$ as well and we must have that
$|Q_i| \geq n^{\eps}$ with probability at least $\frac{1}{2}$ for sufficiently large $d$.
\end{proof}

Now define $A^t_j$ to be the set of $q \in 
\{0,1 \}^d$ such that $j = l_t(q)$. In the non adaptive domain we can assume that all cells 
are light; i.e. w.l.o.g for every $1 \leq j \leq m$ and $1 \leq t \leq r$ it holds that $|A^t_j| 
\leq \frac{2^d}{m}.$ The reason is that a cell $j$ for which $A^t_j$ is large (for some $t$) could 
be split into $|A^t_j| / \frac{2^d}{m}$ light cells with the total number of new cells bounded by $m$.  Our argument now analyzes \emph{shattered} points, which fall into any light cell with very low probability after perturbation.
\begin{defn}
A point $s_i$ is shattered if $\max_{j,t} [ \Pr \nu_{\eps, \frac{\eps}{\mu}} (s_i) \in A^t_j ] \leq n^{\frac{-c_1 \eps}{2}}$ for suitable choice of constant $c_1$.\end{defn}

Note first that $mw \geq \Omega(nd)$, just because the table needs enough space to store all the points to report. Since $w$ is upper bounded by a constant polynomial in $d$, we have that $m$ is polynomial in $n$.
Now for every non-adaptive algorithm, our isoperimetry bound implies that the probability over the choice of $s_i$ that $s_i$ is not shattered is at most $m^{\frac{-c_1 \eps}{2}}$ by Lemma~\ref{modularshatter}.  This probability is also at most $n^{-c_2 \eps}$ for suitable choice of constant $c_2$.   Hence with probability at least $\frac{1}{3}$, it holds that $s_i$ is shattered and  $|Q_i| \geq n^{\eps}$.  

We show that in such a case with constant probability there exists a $q^* \in Q_i$ , such that $l_{[r]} \subset L$, i.e. all 
the table lookups for point $q^*$ are contained in $T[L]$. Our procedure will simply sample points from $\nu_{\frac{\eps}{\mu}, \eps }(s_i)$  until it finds such a point $q^* \in Q_i$. Then by definition of good points, we can reconstruct $p$.

Since $s_i$ is shattered it holds that there are at most $r Q_i / n^{c_2 \eps}$ points in $Q$ that are mapped to the same cell. Now since $|Q_i| \geq n^{\eps}$ we have that there are at least $n^{c_2 \eps} / r$ different good $q$ which map into different rows of the table.  Each of these $q$ has all its probe locations map into $T[L]$ with probability at 
least $\frac{r}{n^{c_2 \eps}}$ so with probability $\geq \frac{1}{2}$ at least one point maps into $T[L]$ and we can reconstruct $p_i$ thereby.

\end{proof}

\section{Lower bounds via classical problems on the Hamming cube.}

In this section we will lay out lower bounds on the Bregman approximate near neighbor via reductions from the Hamming cube. The first reduction follows from ``symmetrizing'' our input by a simple bit trick and hence is independent of any asymmetry or $\mu$ parameter. The second reduction follows from the observation that for 
large enough $\mu$ ($\mu \geq cd $ for suitable constant $c$), only one direction of bit flip essentially determines the nearest neighbor. Under this regime, \ac{pm} can be reduced to our problem and hence the latter inherits the corresponding lower bounds. 
\subsection{A lower bound via $\ell_1$.}
\label{sec:reduct-from-hamm}

We start by defining a combinatorial structure isomorphic to the Hamming cube, which we term the \emph{pseudo-Hamming-cube}. 
\begin{defn}
  Given any uniform Bregman divergence $\breg : \reals^{2d} \times \reals^{2d} \to \reals$, a pseudo-Hamming-cube $C_{2 \phi} \subset \reals^{2d}$ is a set of $2^d$ points with a bijection $f_{ \phi} :\{0,1\}^d \to C_{\phi}$ and a fixed constant $c_0 \in \reals$ so that 
$\breg(f_{ \phi}(x),f_{ \phi}(y)) = c_0||x-y||_1$,
$\forall x,y \in \{0, 1\}^d$. 
\end{defn}
Given a specific uniform Bregman divergence $\breg$, we can compute a corresponding pseudo-Hamming cube explicitly.
\begin{lemma}
For any uniform $\breg : \reals^{2d} \times \reals^{2d} \to \reals$, there exists a pseudo-Hamming-cube $C_{\phi} \subset \reals^{2d}$ and a suitable constant $c_0 \in \reals$. 
\label{lem:pesudo}
\end{lemma}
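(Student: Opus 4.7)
The plan is to construct $f_\phi$ via a ``doubling trick'' that uses the two coordinates allocated per bit to symmetrize the asymmetric divergence. Pick any two distinct points $a, b$ in the relative interior of the domain of $\phi_\reals$; these exist because $\phi_\reals$ is strictly convex and differentiable on an open real interval (or on an open one-dimensional slice of $M_i$). Define $f_\phi : \{0,1\}^d \to \reals^{2d}$ by specifying, for each $i \in [d]$,
\begin{equation*}
  \bigl(f_\phi(x)_{2i-1}, f_\phi(x)_{2i}\bigr) =
  \begin{cases} (a,b) & \text{if } x_i = 0, \\ (b,a) & \text{if } x_i = 1. \end{cases}
\end{equation*}
Let $C_\phi = f_\phi(\{0,1\}^d)$. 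Since $a \neq b$, $f_\phi$ is injective and $|C_\phi| = 2^d$.

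Next I would compute $\breg(f_\phi(x), f_\phi(y))$ for arbitrary $x,y \in \{0,1\}^d$ using the decomposability and uniformity of $\breg$. Since $\breg(f_\phi(x), f_\phi(y)) = \sum_{j=1}^{2d} D_{\phi_\reals}(f_\phi(x)_j, f_\phi(y)_j)$, I would split the sum over the $d$ coordinate pairs. For the pair corresponding to bit $i$, there are two cases: if $x_i = y_i$, both entries match and the two terms contribute $0$; if $x_i \neq y_i$, then regardless of which of $x_i,y_i$ equals $1$, one coordinate contributes $D_{\phi_\reals}(a,b)$ and the other $D_{\phi_\reals}(b,a)$, totalling $D_{\phi_\reals}(a,b)+D_{\phi_\reals}(b,a)$. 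Hence the pair contributes exactly $\bigl(D_{\phi_\reals}(a,b)+D_{\phi_\reals}(b,a)\bigr)\cdot \mathbf{1}[x_i \neq y_i]$.

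Summing over $i \in [d]$ yields
\begin{equation*}
  \breg(f_\phi(x), f_\phi(y)) = \bigl(D_{\phi_\reals}(a,b) + D_{\phi_\reals}(b,a)\bigr)\,\|x-y\|_1,
\end{equation*}
so setting $c_0 := D_{\phi_\reals}(a,b) + D_{\phi_\reals}(b,a)$ establishes the claim. Strict convexity of $\phi_\reals$ together with $a \neq b$ guarantees $c_0 > 0$, so the embedding is nondegenerate.

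There is no substantive obstacle beyond bookkeeping; the crucial observation is simply that while a single bit flip in one direction costs $D_{\phi_\reals}(a,b)$ and in the other direction costs $D_{\phi_\reals}(b,a)$, mirroring each bit across a pair of coordinates forces every mismatched bit to incur both costs at once, which symmetrizes the contribution. The only care needed is to choose $a,b$ inside $\mathrm{relint}(M_i)$ so that $\phi_\reals$ (and hence $D_{\phi_\reals}$) is finite on the chosen points, which is always possible since $\mathrm{relint}(M_i)$ is nonempty and open.
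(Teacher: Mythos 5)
Your proof is correct and follows essentially the same route as the paper's: pick arbitrary $a,b$ in the domain of $\phi_\reals$, encode each bit as $(a,b)$ or $(b,a)$ across a coordinate pair, observe that a mismatched bit forces one coordinate to incur $D_{\phi_\reals}(a,b)$ and the other $D_{\phi_\reals}(b,a)$, and set $c_0 = D_{\phi_\reals}(a,b)+D_{\phi_\reals}(b,a)$. The minor additions (insisting $a,b\in\mathrm{relint}(M_i)$ and noting $c_0>0$ by strict convexity) are sound refinements but do not change the argument.
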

\begin{proof}
 Recall that for uniform $\breg$, we have that $\phi(x) = \sum_{i=1}^d \phi_{\reals}(x_i)$. Pick arbitrary $a,b$ in the domain of $\phi_{\reals}$ and hence obtain the two ordered pairs $(a,b)$ and $(b,a)$ in $\reals^2$. Now stipulate $C_{ \phi} = \{ x_1 \times x_2 \ldots x_d \text{ s.t } x_i \in \{(a,b) ,
 (b,a)\} \} \subset \reals^{2d}$. Note that $|C_{\phi} = 2^d|$. We define the isomorphism $f_{\phi} :\{0,1\}^d \to C_{\phi}$ by using the helper function $\bar{f} : \{ 0,1 \} \to \{ (a,b) , (b,a) \}$
where $\bar{f}(0) = (a,b)$ and $\bar{f}(1) = (b,a)$. Now we state for $x \in \{0,1\}^d$ ,
 \begin{equation*}
f_{\phi}(x) =  \bar{f}(x_1) \times \bar{f}(x_2) \times \ldots \bar{f}(x_d).
\end{equation*}
The insight is that the component $D_{\phi_{\reals}}$ of $D_{\phi}$ on $\bar{f}(x_i)$ and $\bar{f}(y_i)$ between any two $x,y \in C_{\phi}$ is symmetrized, as
\begin{equation} D_{\phi_{\reals \times \reals}}((a,b), (b,a)) = D_{\phi_{\reals \times \reals}}((b,a), (a,b))
= D_{\phi_{\reals}}(b,a) + D_{\phi_{\reals}}(a,b). \end{equation}
Direct computation now shows that $\forall x,y \in \{0, 1\}^d$, we have that:
\begin{equation}
 \breg(f_{ \phi}(x),f_{ \phi}(y)) = \left(D_{\phi_{\reals}} \left(b,a \right) + D_{\phi_{\reals}}\left(a,b \right) \right) ||x-y||_1.
\end{equation} 
This completes our proof, with $c_0 = D_{\phi_{\reals}}(b,a) + D_{\phi_{\reals}}(a,b) $.
\end{proof}
Now that we have defined a distance preserving mapping from the $\ell_1$ Hamming cube in $\reals^d$ with a
$\breg$ pseudo-Hamming-cube in $\reals^{2d}$, it follows that LSH and ANN lower bounds for the $\ell_1$ Hamming cube now transfer over to $\breg$. In particular (and we list just a few here): 

\begin{itemize}
\item{A cell probe lower bound of $\Omega(\frac{d}{\log n})$ queries for any randomized algorithm that solves \emph{exact} 
nearest neighbor search on the Bregman cube in polynomial space and word size
polynomial in $d$, $\log n$ via \cite{emnn}.}
\item{A cell probe lower lower bound of $\Omega(d^{1-o(1)})$ queries for any
    deterministic algorithm that returns a constant factor approximation to the
    Bregman nearest neighbor via \cite{liu}.}
\item{A cell probe lower bound of $\Omega \left(\frac{\log \log d}{\log \log \log d} \right)$ for any randomized algorithm that returns 
a constant factor approximation to the Bregman nearest neighbor via \cite{chakraOptimal}.}
\end{itemize}

\subsection{A lower bound via \ac{pm}.}
\label{sec:partial}
We consider the following version of the \emph{\ac{pm} problem}: given point set $P \subset \{0,1 \}^d$ and $q \in \{0,1 \}^d$ determine whether 
$q$ dominates any point in $P$, i.e., output YES if $\exists p \in P$, s.t. $q_i \geq p_i$, $\forall 1 \leq i \leq d$ and NO otherwise. This is known by folklore to be equivalent to the more popular statement of the problem where $q \in \{0, 1 , * \}^d$ and we must determine whether $q$ matches any string in $P$ (and where $*$ can match anything). See for instance \cite{unifying} for a statement of this equivalence.

We construct our reduction from an instance of the \ac{pm} problem to an Bregman approximate near neighbor instance as follows. Set $\mu = 2d+1$, and define $D$ as given in Section \ref{sec:bregman-cube}. It is clear that $D(q,x) \leq d$ if and only if $x$ is a \ac{pm} for $q$ and that $D(q,x) \geq 2d+1$ otherwise. We immediately obtain the following:

\begin{theorem}\label{thm:from-partial}
Let $P \subset \{0,1 \}^d$, $q \in \{0,1 \}^d$ and $\mu \geq 2d+1$. Any algorithm which returns a $2$-approximate Bregman nearest neighbor $p \in P$ to query point $q$ with constant probability will solve the \ac{pm} problem with constant probability. 
\end{theorem}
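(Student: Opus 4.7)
The plan is to instantiate the Bregman ANN machinery on the same input $(P,q)$ as the \ac{pm} problem, with asymmetry parameter $\mu = 2d+1$, and to show that a $2$-approximate nearest neighbor under $D$ must itself be a partial match. The reduction then just asks the Bregman oracle for a candidate and verifies whether it is dominated by $q$; any failure would directly contradict the approximation guarantee.

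First I would carefully unpack the two cases of the distance $D$ as defined in Section~\ref{sec:bregman-cube}. Writing $D(q,x) = \mu\,|\{i : x_i > q_i\}| + |\{j : q_j > x_j\}|$, observe that the $\mu$-weighted term counts exactly those coordinates that witness $q$ \emph{failing} to dominate $x$. Hence if $x$ is a partial match for $q$ (i.e. $q \ge x$ coordinatewise), the first term vanishes and $D(q,x) \le d$. Conversely, if $x$ is not a partial match, there exists at least one coordinate with $x_i > q_i$, contributing $\mu = 2d+1 > 2d$, so $D(q,x) \ge 2d+1$.

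Next I would combine these two bounds to extract the gap. If a partial match exists in $P$, then $\min_{p\in P} D(q,p) \le d$, and any non-PM point $p'$ satisfies $D(q,p') \ge 2d+1 > 2\cdot d \ge 2\min_{p\in P} D(q,p)$. Therefore any $2$-approximate nearest neighbor must itself be a partial match. The reduction then proceeds as follows: run the $2$-approximate Bregman ANN on $(P,q,\mu)$ to obtain $\tilde p$, and output YES iff $q$ dominates $\tilde p$ coordinatewise (which is checkable in $O(d)$ time). If a partial match exists, the oracle returns a PM with constant probability; if none exists, every point in $P$ is at distance $\ge 2d+1$ and no reported point can pass the dominance test, so the algorithm correctly outputs NO with probability $1$.

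There is essentially no technical obstacle: the only subtle point is orienting the asymmetric distance correctly so that ``$q$ dominates $p$'' corresponds to the cheap direction of bit flips (where $D(q,p)$ accumulates only the cost-$1$ terms), which is exactly the convention fixed right after the definition of $D$ and in the paragraph preceding the theorem. Once that orientation is pinned down, the $\mu = 2d+1$ choice creates a strict gap of strictly more than a factor of $2$ between partial matches (cost $\le d$) and non-matches (cost $\ge 2d+1$), so a $2$-approximation suffices and the constant probability of success for PM follows immediately from that of the Bregman ANN oracle.
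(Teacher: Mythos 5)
Your proposal is correct and follows essentially the same reduction the paper uses: set $\mu = 2d+1$, observe that $D(q,x) \le d$ iff $x$ is a partial match for $q$ and $D(q,x) \ge 2d+1$ otherwise, so the $\mu$-weighted term creates a gap strictly larger than a factor of $2$. The paper compresses this into a single observation preceding the theorem; your version merely spells out the verification of each direction and the explicit dominance check, but the core argument is identical.
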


This simple reduction immediately implies that for $\mu \geq \Omega(d)$ and a constant factor approximate nearest neighbor, our problem inherits the lower bounds on \ac{pm}.  These include, but are not limited to the following:

\begin{itemize}
\item{An $\Omega \left( \frac{d}{\log d} \right)$ lower bound on the number of cell probes for any randomized
 near-linear space algorithm, via a result by Patrascu and Thorup \cite{directsum}.}
\item{An $\Omega \left(2^{ \Omega \left( d / r \right)} \right)$ lower bound on the space required by any randomized algorithm which uses $r$ queries, via the result by Patrascu \cite{unifying}.}
\end{itemize}

\subsection{Comparisons and comments on the behavior of the lower bounds with $\mu$.}
It is worthwhile to contrast the bounds obtained from the simple reduction of Section \ref{sec:partial} to our more involved main result of Theorem \ref{thm:main}.

\begin{itemize}
\item{The simple reduction from \ac{pm} in Section \ref{sec:partial} implies a lower bound only for $\mu \geq \Omega(d)$, whereas our main result of Theorem \ref{thm:main} holds for far smaller asymptotic ranges of $\mu$, upto $O \left( \frac{d}{\log n} \right)$.}
\item{ 
For $\mu = \Omega \left( \frac{d}{ \log n} \right)$, Theorem \ref{thm:main} already implies a lower bound of $dn^{1+\Omega \left(d / r \log n \right) } =
2^{\Omega \left(\frac{d}{r} \right)}$ on the space required by a (non-adaptive) data structure that uses only $r$ queries. The reduction from \ac{pm} achieves this same space lower bound at a far higher $\mu = \Omega(d)$ via the result of \cite{unifying}.}
\item{For the strictly linear space ($O(nd)$) regime, Theorem 1.1 implies a lower bound of $\Omega(d)$ on number of queries required for our Bregman \ac{ann}. (Set 
$dn^{\Omega \left(1 + d / r \log n \right) } = O(nd)$, to get that $n^{\Omega \left(d / r \log n \right) } = 2^{\Omega(d/r)}$ must be $O(1)$ and hence $r = \Omega(d)$.) This is infact \emph{stronger} than any lower bound for number of queries known on \ac{pm} (the best we are aware of is $\Omega(d / \log d)$ by
~\cite{directsum} under any polynomial space).}
\end{itemize} 

As such, the parameter $\mu$ appears a natural way to interpolate between the well known problems of approximate nearest neighbor under $\ell_1$ and the \ac{pm} problem. At constant values of $\mu$, a constant factor ANN under $D$ corresponds to a constant factor ANN under $\ell_1$, whereas at $\mu \geq \Omega(d)$, a constant ANN under $D$ solves \ac{pm}. The point $\mu = \Omega(\frac{d}{\log n})$ then appears to be an interesting point along this interpolation where the space lower bounds are already asymptotically equal to those for \ac{pm}, with the qualifier that Theorem \ref{thm:main} holds for non-adaptive data structures.  The question remains open of whether \ac{pm} lower bounds themselves could infact be strengthened by a reduction to  Bregman \ac{ann}.

\section{Open Questions}
\label{sec:oq}
One open question is to convert our lower bounds to be non-adaptive. Panigrahy,
Talwar and Wieder~\cite{geometric} do give such a conversion for $\ell_2^2$
under symmetric perturbations, but it is unclear how to generalize their
argument to asymmetric perturbation operators. A second intriguing direction we
have already referred is whether our analysis of asymmetric isoperimetry can
open a different avenue of attack for lower bounds on Partial Match. Finally,
the question remains as to whether directed hypercontractivity offers insights
or generalizations for other expansion related problems previously considered on
the undirected hypercube. In this regard, recent work by Chakrabarty and
Seshadhri \cite{chakrabarty2013} and Khot \etal \cite{khot15:_monot_testin_boolean_isoper_theor} on formulations of directed hypercontractivity for
property testing represent a promising direction. 

\section{Acknowledgments}
\label{sec:acknowledgements}
The authors thank Piotr Indyk, Elchanan Mossel, James Lee, John Moeller, 
Amit Chakrabarti and anonymous referees for helpful discussions. We
also thank Nathan Keller for pointing us to the paper by Ahlberg \etal \cite{biasedcontract} and a reviewer for pointing us towards the reduction
from partial match.
\newpage

\bibliographystyle{plainnat}
\bibliography{pm,nn}

\newpage
\appendix
\section{Proof of Theorem \ref{cited}}
\label{sec:ahlberg}

As we mentioned in Section \ref{sec:hyperc-r_p-0}, \citet{biasedcontract}
establish a relation between biased and asymmetric noise operators Fourier
coefficients. The notation and setting they use are for a different application,
and so we give a self-contained reproduction of their proof here. 

Recall that our goal is to prove that
\begin{align*}
\widehat{R_{0,p}f}^{\left(1/2 \right)}(S) &= \left( \sqrt{\frac{1-p}{1+p}} \right)^{|S|} \hat{f}^{ \left(\frac{1-p}{2} \right)}(S).\\
\widehat{R_{p,0} f}^{\left( 1/2 \right)}(S) & = \left( \sqrt{\frac{1-p}{1+p}} \right)^{|S|} \hat{f}^{ \left(\frac{1+p}{2} \right)}(S).
\end{align*}

\begin{proof}
We prove the first equality. The second follows a similar argument, exchanging the measure of $0$ and $1$ alongwith the direction of perturbation.

First by definition:
\begin{equation}\label{starteq}
\widehat{R_{0,p}f}^{\left(1/2 \right)}(S) =E_{x \in \{0,1\}^d} \left[ E_{z|x} \left[  \chi_S(x) f(z) \right] \right]
\end{equation}

Now exchanging the order of integration over $x$ and $z$ simplifies analysis. Namely, we have that:

\begin{equation}\label{overall}
E_{x \in \{0,1\}^d} \left[ E_{z|x} \left[  \chi_S(x) f(z) \right] \right]  = E_{z \in \{0, 1 \}^d} \left[E_{x|z} \left[ \chi_S(x) f(z) \right] \right] 
=E_{z \in \{0, 1 \}^d} f(z) \left[E_{x|z} \left[ \chi_S(x) \right] \right]
\end{equation}

We define $E_{z \in \{0,1\}^d}$ more carefully, since this need not correspond to the uniform distribution. First note $\Pr[z_i = 0] = 
\Pr[x_i = 0]  + \Pr[x_i=1]* p = \frac{1+p}{2}$.  Similarly, note that $\Pr[z_i = 1] = \Pr[x_i=1]*(1-p) = \frac{1-p}{2}$. This immediately shows 
$E_{z \in \{0,1 \}^d }$ corresponds to the measure $\kappa_{\frac{1-p}{2}}$.

Now equation \ref{overall} lets us focus on analyzing the term $E_{x|z} \left[ \chi_S(x) \right]$ point wise. We observe that $\chi_S(x) = \prod_{i \in S} \chi_i(x)$ and that $\nu_{0,p}$ is applied independently on each bit. Hence we can reduce the complexity even further to analyzing essentially on $1$ bit:

\begin{equation}\label{theprod}
E_{x|z} \left[ \chi_S(x) \right] = \prod_{i \in S} E_{x|z} \chi_i(x).
\end{equation}
 
That is we focus on $E_{x|z} \chi_i(x)$ for any fixed $i$. Now by Bayes' rule:
\begin{equation}
\Pr[x_i = 0 | z_i =0] = \frac{\Pr[x_i=0, z_i=0]}{\Pr[z_i=0]}  =  \frac{1/2}{(1+p)/2} = \frac{1}{1+p}.
\end{equation}

And similarly:
\begin{equation} 
\Pr[x_i = 1 | z_i =0] = \frac{\Pr[x_i=1, z_i=0]}{\Pr[z_i=0]} =  \frac{p/2}{(1+p)/2} = \frac{p}{1+p}.
\end{equation}

So we then get
\begin{align*}E_{x|z_i=0} \chi_i(x) &= (1) \Pr[x_i=0 | z_i = 0]  + (-1) \Pr[x_i=1 | z_i = 0] \\
&= (1) \frac{1}{1+p} + (-1) \frac{p}{1+p}  = \frac{1-p}{1+p}.
\end{align*}
Or  \begin{equation}\label{zero} E_{x|z_i=0} \chi_i(x) = = \frac{1-p}{1+p} . \end{equation}

Now if $z_i = 1$ then we must have $x_i  = 1$, since the $\nu_{0,p}$ cannot flip $x_i = 0$ to $x_i = 1$. Recall that 
$\chi_i(x) = -1$ iff $x_i = 1$ and hence 
\begin{equation}\label{one}E_{x|z_i=1} \chi_i(x) = -1.
\end{equation}.

We now wish to normalize equations \ref{zero} and \ref{one}. That is, find a biased measure $p'$ and attenuating factor $\eps$ such that $E_{x|z} \chi_i(x) = \eps \chi_i^{p'}(z)$. This sets up the two equations, recalling that $\chi^{p'}(0) = \sqrt{\frac{p'}{1-p'}}$ and $\chi^{p'}(1) = - \sqrt{\frac{1-p'}{p'}}$.

\begin{equation}\label{first}
\frac{1-p}{1+p} =  \eps \sqrt { \frac{p'}{1-p'}}.
\end{equation}

\begin{equation}\label{second}
-1 =  -\eps \sqrt {\frac{1-p'}{p'}}
\end{equation}

Multiplying equations \ref{first} and \ref{second} gives $-\frac{1-p}{1+p} = -\eps^2$ or more simply $\eps = \sqrt{\frac{1-p}{1+p}}$. Substituting in equation \ref{second} for $\eps$ now gives $\sqrt{\frac{1-p'}{p'}} =  \sqrt{\frac{1+p}{1-p}}$. Clearly this is satisfied by putting $p' = \frac{1-p}{2}$. 
We finally obtain:
\begin{equation}\label{key}
E_{x|z} \chi_i(x) =  \sqrt{\frac{1-p}{1+p}} \chi_i^{\left( \frac{1-p}{2} \right)}(z).
\end{equation}

By substituting in equation \ref{theprod}, we have:
\begin{equation}
E_{x|z} \left[ \chi_S(x) \right] =  \sqrt{\frac{1-p}{1+p}}^{|S|} \chi_S^{\left( \frac{1-p}{2} \right)}(z).
\end{equation}

 Substituting back in equation \ref{overall} now gives us:

\begin{equation} E_{z \in \{0, 1 \}^d} f(z) \left[E_{x|z} \left[ \chi_S(x) \right] \right] = E_{z \in \{0, 1 \}^d} f(z)  \sqrt{\frac{1-p}{1+p}}^{|S|} \chi_S^{\left( \frac{1-p}{2} \right)}(z) .
\end{equation}
Recalling that we have already shown $E_{z \in \{0 , 1 \}^d}$ corresponds to $\kappa_{\frac{1-p}{2}}$, the desired result now follows from definition of $\hat{f}^{\frac{1-p}{2}}$.
\end{proof}

\end{document}